\numberwithin{equation}{section}
\newtheorem{Thm}{Theorem}
\newtheorem{Proposition}{Proposition}
\numberwithin{equation}{section}
\def \tyb#1{\hbox{\tiny{[{\it{#1}}]}}}
\def \ty#1{\hbox{\tiny{{\it{#1}}}}}
\DeclareMathAccent{\wtilde}{\mathord}{largesymbols}{"65}
\DeclareMathAccent{\what}{\mathord}{largesymbols}{"62}
\def\m@th{\mathsurround=0pt}
\mathchardef\bracell="0365
\def\upbrall{$\m@th\bracell$}
\def\undertilde#1{\mathop{\vtop{\ialign{##\crcr
    $\hfil\displaystyle{#1}\hfil$\crcr
     \noalign
     {\kern1.5pt\nointerlineskip}
     \upbrall\crcr\noalign{\kern1pt
   }}}}\limits}
\def\underhat#1{\mathop{\vtop{\ialign{##\crcr
    $\hfil\displaystyle{#1}\hfil$\crcr
     \noalign
     {\kern1.5pt\nointerlineskip}
     \upbrall\crcr\noalign{\kern1pt
   }}}}\limits}
\def\theequation{\arabic{section}.\arabic{equation}}
\newcommand{\wb}[1]{\overline{#1}}
\newcommand{\wh}{\widehat}
\newcommand{\wt}{\widetilde}
\newcommand{\tc}{\,^t \hskip -2pt {\boldsymbol{c}}}
\newcommand{\ts}{\,^t \hskip -2pt {\boldsymbol{s}}}
\newcommand{\nn}{\nonumber}
\newcommand{\Ga}{\boldsymbol{\Gamma}}
\newcommand{\La}{\boldsymbol{\Lambda}}
\newcommand{\bF}{\boldsymbol{F}}
\newcommand{\bG}{\boldsymbol{G}}
\newcommand{\bH}{\boldsymbol{H}}
\newcommand{\bI}{\boldsymbol{I}}
\newcommand{\bK}{\boldsymbol{K}}
\newcommand{\bL}{\boldsymbol{L}}
\newcommand{\bM}{\boldsymbol{M}}
\newcommand{\bT}{\boldsymbol{T}}
\newcommand{\br}{\boldsymbol{r}}
\newcommand{\bu}{\boldsymbol{u}}
\newcommand{\bv}{\boldsymbol{v}}
\newcommand{\st}{\hbox{\tiny\it{T}}}
\newcommand{\del}{\delta}
\newcommand{\og}{\omega}
\newcommand{\kp}{\kappa}
\newcommand{\io}{\iota}
\begin{document}

\title{Revisit to solutions for Adler-Bobenko-Suris lattice equations and lattice Boussinesq-type equations}

\author{Ke Yan$^{1}$, Ying-ying Sun$^2$, Song-lin Zhao$^{1*}$\\
\\\lowercase{\scshape{${}^1$ Department of Applied Mathematics, Zhejiang University of Technology,
Hangzhou 310023, P.R. China}} \\
\lowercase{\scshape{
${}^2$ Department of Mathematics, University of Shanghai for Science and Technology, Shanghai,
200093, P.R. China}}}
\email{*Corresponding Author: songlinzhao@zjut.edu.cn}

\begin{abstract}

Solutions for all Adler-Bobenko-Suris equations excluding $\mathrm{Q4}$ and several lattice Boussinesq-type
equations are reconsidered by employing the Cauchy matrix approach. Through introducing a ``fake'' nonautonomous plane wave factor,
we derive soliton solutions, oscillatory solutions, and semi-oscillatory solutions,
for the target lattice equations. Unlike the conventional
soliton solutions, the oscillatory solutions take constant values on all elementary quadrilaterals on $\mathbb{Z}^2$,
which demonstrate periodic structure.

\end{abstract}

\keywords{Cauchy matrix approach, Adler-Bobenko-Suris lattice equations,
lattice Boussinesq-type equations, soliton solutions, (semi-)oscillatory solutions}

\maketitle

\section{Introduction}

The concept of multi-dimensional consistency, which was introduced by Nijhoff
et al. \cite{Nijhoff-MDC} and consolidated by the work of many others \cite{BS-2002,ABS-2009,Boll,LKP,ZKZ-CAC} in the early 2000's, is widely regarded as a significant breakthrough in the field of discrete integrable systems.
Among the integrable two-dimensional difference equations, a particularly noteworthy subset is comprised of equations defined on the vertices of $\mathbb{Z}^2$ lattices that are multi-dimensional consistent. The multi-dimensional consistency property for this type of quadrilateral equation
\begin{align}
\label{eq-Q}
Q(u, \wt{u}, \wh{u}, \wh{\wt{u}}; p,q)=0,
\end{align}
with a dependent variable denoted by $u:=u(n,m)$ and continuous lattice parameters $p$ and $q$,
can be interpreted geometrically as a consistency around the cube (CAC).  This property allows for an extension from a quadrilateral to a cube by adding a third dimension, such that the maps are consistent on the cube (see Figure 1). Shorthand notations $\wt{u}:=u(n+1,m),~
\wh{u}:=u(n,m+1),~\wh{\wt{u}}:=u(n+1,m+1)$ are employed in the equation \eqref{eq-Q}.  Furthermore, when a lattice equation is CAC, the equation itself, its B\"{a}cklund
transformation and its Lax pair are directly and explicitly related \cite{Atki-BT,Nij-Adler,BHQK-FCM}.
\begin{figure}[b]
\setlength{\unitlength}{0.0004in}
\hspace{2cm}
\begin{picture}(3482,2813)(0,-10)
\put(-800,1510){\makebox(0,0)[lb]{$(a)$}}
\put(1275,2708){\circle*{150}}
\put(825,2808){\makebox(0,0)[lb]{$\wh u$}}
\put(3075,2708){\circle*{150}}
\put(3375,2808){\makebox(0,0)[lb]{$\wh{\wt u}$}}
\put(1275,908){\circle*{150}}
\put(825,1008){\makebox(0,0)[lb]{$u$}}
\put(3075,908){\circle*{150}}
\put(3300,1008){\makebox(0,0)[lb]{$\wt u$}}
\drawline(275,2708)(4075,2708)
\drawline(3075,3633)(3075,0)
\drawline(275,908)(4075,908)
\drawline(1275,3633)(1275,0)
\end{picture}
\hspace{4cm}
\begin{picture}(3482,3700)(0,-500)
\put(-1200,1000){\makebox(0,0)[lb]{$(b)$}}
\put(450,1883){\circle{150}}
\put(-100,1883){\makebox(0,0)[lb]{$\wb{\wt u}$}}
\put(1275,2708){\circle*{150}}
\put(825,2708){\makebox(0,0)[lb]{$\wb u$}}
\put(3075,2708){\circle{150}}
\put(3375,2633){\makebox(0,0)[lb]{$\wb{\wh u}$}}
\put(2250,83){\circle{150}}
\put(2650,8){\makebox(0,0)[lb]{$\wh{\wt u}$}}
\put(1275,908){\circle{150}}
\put(1275,908){\circle*{90}}
\put(825,908){\makebox(0,0)[lb]{$u$}}
\put(2250,1883){\circle{150}}
\put(2250,1883){\circle{220}}
\put(2250,1883){\circle{80}}
\put(1850,2108){\makebox(0,0)[lb]{$\wb{\wh{\wt u}}$}}
\put(450,83){\circle*{150}}
\put(0,8){\makebox(0,0)[lb]{$\wt u$}}
\put(3075,908){\circle*{150}}
\put(3300,833){\makebox(0,0)[lb]{$\wh u$}}
\drawline(1275,2708)(3075,2708)
\drawline(1275,2708)(450,1883)
\drawline(450,1883)(450,83)
\drawline(3075,2708)(2250,1883)
\drawline(450,1883)(2250,1883)
\drawline(3075,2633)(3075,908)
\dashline{60.000}(1275,908)(450,83)
\dashline{60.000}(1275,908)(3075,908)
\drawline(2250,1883)(2250,83)
\drawline(450,83)(2250,83)
\drawline(3075,908)(2250,83)
\dashline{60.000}(1275,2633)(1275,908)
\end{picture}
\caption{(a): The points on which the map is defined, and (b): the consistency cube.\label{F:1}}
\end{figure}
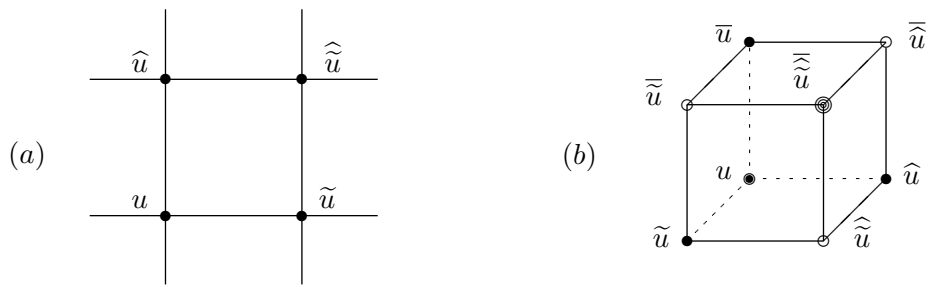

Based on the property of CAC and three additional requirements on lattice equations:
affine linear, $D_4$ symmetry and the so-called `tetrahedron property', a classification
of nonlinear integrable discrete equations was performed on the vertices
of elementary quadrangle of the $\mathbb{Z}^2$ lattice \cite{ABS-2003}. The resulting classification, known as the Adler-Bobenko-Suris (ABS) lattice list, consists of nine lattice equations: $\mathrm{H1}$, $\mathrm{H2}$, $\mathrm{H3}_{\delta}$, $\mathrm{A1}_{\delta}$,
$\mathrm{A2}$, $\mathrm{Q1}_{\delta}$, $\mathrm{Q2}$, $\mathrm{Q3}_{\delta}$, $\mathrm{Q4}$.
It should be noted that if  $\mathrm{Q4}$, also known as Adler's equation \cite{Adler-1998}, is excluded from the list, Q3$_\delta$ can act
as a top equation in the ABS list. This implies that other ``lower" equations can be obtained from the degenerations of $\mathrm{Q3}_\delta$ \cite{H-2005} (see Figure 2 in Subsection \ref{Subsec-Q3-dege}). A list of equations is compiled  in Appendix A.
Among these equations \eqref{ABS}, some are  categorized as lattice Korteweg-de Vries (KdV) type equations. For instance,  H1,
the ``lowest" member in the list, is nothing but the well-known lattice potential Korteweg-de Vries (lpKdV) equation, which firstly
appeared as a nonlinear superposition of B\"{a}cklund transformations of the potential KdV equation
\cite{WE-1973}. In addition, $\mathrm{H3}_{\delta=0}$ corresponds to the lattice
potential modified KdV (lpmKdV) equation and $\mathrm{Q1}_{\delta=0}$ is the lattice Schwarzian KdV (lSKdV)
equation, i.e., the cross-ratio equation (see the review paper \cite{lKdV}).
 It is worth noting that all these three equations can be derived  from the Nijhoff-Quispel-Capel (NQC)
equation through distinct parameter choices \cite{NQC-1983}.

In addition to the lattice maps with a single component for each lattice site, there exist three-dimensional
consistent three-component maps related to the Boussinesq (BSQ) equation (see \cite{Walker}). The lattice potential BSQ (lpBSQ) equation \cite{Tongas}, a member of lattice BSQ-type equations,  introduced as the first higher-rank case of the lattice Gel'fand-Dikii hierarchy \cite{GD}.
The bottom member of this hierarchy is the H1 equation. Similar to the lpKdV equation, the lpBSQ
equation can be obtained as nonlinear superposition formulas of the B\"{a}cklund transformations for the potential BSQ
equation \cite{BT-BSQ}. Together with the lpBSQ equation, two Miura-related three-component lattice equations, the lattice
potential modified BSQ (lpmBSQ) \cite{GD} and the lattice Schwarzian BSQ (lSBSQ) equations \cite{Schwar-1} have been also proposed.
It has been found that all these three-component lattice BSQ-type equations possess the CAC property. In \cite{H-BSQ},
a search for integrable lattice multi-component BSQ-type equations was conducted by Hietarinta, resulting in a
remarkable classification of integrable BSQ-type equations. It was subsequently proven
that all of Hietarinta's lattice BSQ-type equations arise from the so-called extended lattice BSQ systems \cite{ZZF-SAPM}.
In \cite{HZ-DBSQ}, Hietarinta and Zhang presented a comprehensive review of the lattice
BSQ-type equations, which includs continuum limits, Lax representation, Hirota bilinear forms, and soliton
solutions in terms of Casoratians,  based on the three-component forms on an elementary quadrilateral.

So far many classical methods have been applied to solve the ABS lattice list and lattice BSQ-type equations,
such as, the inverse scattering transform \cite{H1-IST,ABS-IST}, the Darboux transform \cite{H1-DT,H3-DT,mBSQ-DT},
the Hirota's bilinear method \cite{HZ-ABS,HZ-H1,HZ-BSQ,HZ-eBSQ}, and the Cauchy matrix approach
\cite{Nijhoff-ABS,ZZ-SAM-2013,FZZ-JNMP}. By virtue of the bilinear method, the fixed point idea was used
to construct seed solutions for the nonlinear lattice equations.
For the $\mathrm{H1}$ equation \eqref{H1}, there are four types of seed solutions, which are
\begin{subequations}
\label{H1-seed-solu}
\begin{align}
& -pn-qm+\mu_0, \\
& (-1)^np/2-qm+\mu_0, \\
& -pn+(-1)^mq/2+\mu_0, \\
& (-1)^np/2+(-1)^mq/2+\mu_0,
\end{align}
\end{subequations}
with  constant $\mu_0\in \mathbb{C}$. The first one in \eqref{H1-seed-solu} can be regarded as a linear background solution for $\mathrm{H1}$ equation
associated with its soliton solutions \cite{HZ-ABS} and the remaining solutions can be employed in the construction of (semi-)oscillatory solutions \cite{OS-lpKdV}. We have shown that
the lpKdV, lpmKdV and lSKdV equations possess (semi-)oscillatory solutions in \cite{FZ-OS}. The distinguishing feature among soliton solution, semi-oscillatory solution and oscillatory solution is primarily centered around  the main ingradient emerging from the respective solutions, i.e., the following discrete plane wave factor:
\begin{align*}
\begin{array}{llclcl}
\mathrm{linear~solution:}& -pn-qm+\mu_0 & \to & -pn+\frac{(-1)^m}{2}q+\mu_0 & \to & \frac{(-1)^n}{2}p+\frac{(-1)^m}{2}q+\mu_0, \\
\mathrm{plane~wave~factor:}&
\left(\frac{p+k}{p-k}\right)^n\left(\frac{q+k}{q-k}\right)^m \rho^{0}& \to &
\left(\frac{p+k}{p-k}\right)^n\frac{q-(-1)^mk}{q-k}\rho^{0} & \to &
\frac{p-(-1)^nk}{p-k}\frac{q-(-1)^mk}{q-k}\rho^{0}.
\end{array}
\end{align*}

In this paper, our aim is to take advantage of the Cauchy matrix approach developed in \cite{Nijhoff-ABS,ZZ-SAM-2013,FZZ-JNMP}  to study the
(semi-)oscillatory solutions of the ABS lattices \eqref{ABS} and some lattice BSQ-type equations.
%The Cauchy matrix scheme was originally proposed
%by Nijhoff, Atkinson and Hietarinta in \cite{Nijhoff-ABS} to construct soliton solutions for all lattice equations in
%the ABS list except for the elliptic case of $\mathrm{Q4}$. It was later
%developed by Zhang and Zhao {\it et al.} to construct several integrable discrete equations and their exact solutions
%\cite{ZZ-SAM-2013,ZZS-2012,FZZ-JNMP,dAKNS-JNMP,dpAKNS-JNMP,ZZS-TMP}. Apart from the discrete integrable systems,
%the Cauchy matrix approach is also valid in the exploration of the continuous and semi-discrete integrable systems \cite{XZZ,Z-AKNS,MZ-TMP}.
In order to derive the (semi-)oscillatory solutions of the ABS lattices, we extend the usual plane wave factor
\begin{align}
\label{ABS-pwf-soli}
\theta(k)=\bigg(\frac{p+k}{p-k}\bigg)^n\bigg(\frac{q+k}{q-k}\bigg)^m\rho^{0},\quad \rho^{0}=\text{constant},
\end{align}
to a ``fake'' nonautonomous plane wave factor\footnote{Here we call
\eqref{ABS-pwf} a ``fake'' nonautonomous plane wave factor since it has a nonautonomous structure, while it
provides the plane wave factor for the autonomous ABS lattices.}
\begin{align}
\label{ABS-pwf}
\rho(k)=\prod^{n-1}_{i=0}\bigg(\frac{f_ip+k}{f_ip-k}\bigg)\prod^{m-1}_{j=0}\bigg(\frac{g_jq+k}{g_jq-k}\bigg)\rho^{0},
\end{align}
where $f_i:=f(i)$ and $g_j:=g(j)$ are discrete functions satisfying the conditions $f_i^2=g_j^2=1$.
When $f_i=g_j=1$, \eqref{ABS-pwf} reduces to \eqref{ABS-pwf-soli} which can be used to generate soliton solutions
of the ABS lattice equations in the framework of the Cauchy matrix method (cf. \cite{Nijhoff-ABS,ZZ-SAM-2013}). We will show that
the other choices, such as $(f_i,g_j)=((-1)^i,(-1)^j)$ or $(1,(-1)^j)$, could lead to oscillatory solutions or
semi-oscillatory solutions for the ABS lattice equations.
Similarly, for constructing the (semi-)oscillatory solutions to the lattice BSQ-type equations, we will resort to the discrete functions
$\mathcal {F}_l:=\mathcal {F}(l)$ and $\mathcal {G}_h:=\mathcal {G}(h)$
satisfying the conditions $\mathcal {F}_l^3=\mathcal {G}_h^3=1$ and  the ``fake'' nonautonomous plane wave factors
\begin{subequations}
\label{BSQ-pwf}
\begin{align}
& \varrho(k)=\prod^{n-1}_{l=0}\bigg(\frac{\mathcal {F}_lp+k}{\mathcal {F}_lp+\omega k}\bigg)
\prod^{m-1}_{h=0}\bigg(\frac{\mathcal {G}_hq+k}{\mathcal {G}_hq+\omega k}\bigg)\varrho^{0}, \\
& \sigma(k)=\prod^{n-1}_{l=0}\bigg(\frac{\mathcal {F}_lp+k}{\mathcal {F}_lp+\omega^2 k}\bigg)
\prod^{m-1}_{h=0}\bigg(\frac{\mathcal {G}_hq+k}{\mathcal {G}_hq+\omega^2 k}\bigg)\sigma^{0},
\end{align}
\end{subequations}
where parameter $\omega \neq 1$ is a cubic root of unity and $\varrho^{0},~\sigma^{0}$ are constants.

This paper is organized as follows. In Section 2, we begin by considering a determining equation set (DES), which is associated with
the ``fake'' nonautonomous plane wave factor \eqref{ABS-pwf}. From this, we introduce several
master functions $S^{(i,j)},~S(a,b)$ and $V(a)$ and use them to derive closed-form lattice KdV-type
equations. We also present soliton and (semi-)oscillatory solutions to $\mathrm{Q3}_{\delta}$,  which can be degenerated to construct solutions for all the "lower" equations in the ABS list. Section 3 focuses on soliton and (semi-)oscillatory solutions of the lattice BSQ-type equations. Section 4 provides conclusions and some remarks. Finally, three appendices are included to supplement this paper.

\section{The ABS lattice equations and solutions}
\label{sec:2}

In this section, we perform the Cauchy matrix scheme to reconsider the solutions for the ABS lattices in \eqref{ABS}.
In contrast to the conventional Cauchy matrix approach (cf. \cite{Nijhoff-ABS,ZZ-SAM-2013}),
our methodology begins with an enhanced plane wave factor \eqref{ABS-pwf}. This approach yields a minimum of three distinct solution types: solitons, oscillatory solutions and semi-oscillatory
solutions for the ABS lattice equations \eqref{ABS}.

\subsection{The Sylvester equation and master functions}

To proceed, we initially consider the following DES
\begin{subequations}
\label{ABS-DES}
\begin{align}
\label{ABS-SE}
& \bK \bM+\bM\bK=\br\, \tc, \\
\label{ABS-evo-r-p}
& (f_np\bI-\bK)\wt{\br}=(f_n p\bI+\bK)\br,  \\
\label{ABS-evo-r-q}
& (g_mq\bI-\bK)\wh{\br}=(g_m q\bI+\bK)\br,
\end{align}
\end{subequations}
in which $\bM\in \mathbb{C}_{N\times N}$ and $\br\in \mathbb{C}_{N\times 1}$
are undetermined matrices depending on independent variables $n$ and $m$, $\tc\in \mathbb{C}_{1\times N}$
and $\bK\in \mathbb{C}_{N\times N}$ are non-trivial constant matrices.  In order to ensure that the Sylvester equation is solvable \eqref{ABS-SE}, we make the assumption that $\mathcal{E}(\bK)\bigcap \mathcal{E}(-\bK)=\varnothing$, where $\mathcal{E}(\bK)$
denotes eigenvalue sets of the matrix $\bK$ (cf. \cite{Syl,Bhatia}). Note that throughout this paper, the unit matrix denoted by $\bI$ will be utilized, with the index indicating its size being omitted.

Now we introduce some master functions
\begin{subequations}
\label{ABS-MF}
\begin{align}
\label{ABS-Sij}
& S^{(i,j)}=\tc\,\bK^j(\bI+\bM)^{-1}\bK^i\br, \quad i,j\in \mathbb{Z}, \\
\label{ABS-Sab}
& S(a,b)=\tc(b\bI+\bK)^{-1}(\bI+\bM)^{-1}(a\bI+\bK)^{-1}\br, \quad a,b\in \mathbb{C}, \\
\label{ABS-Va}
& V(a)=1-\tc(\bI+\bM)^{-1}(a\bI+\bK)^{-1}\br,
\end{align}
\end{subequations}
which play a crucial role in the construction of closed-form lattice equations.

In terms of the Sylvester equation \eqref{ABS-SE}, we have the following symmetric properties (see \cite{ZZ-SAM-2013}), i.e.,
\begin{subequations}
\label{ABS-Symm}
\begin{align}
\label{ABS-Sijab-Symm}
& S^{(i,j)}=S^{(j,i)}, \quad S(a,b)=S(b,a), \\
& V(a)=1-\tc(\bI+\bM)^{-1}(a\bI+\bK)^{-1}\br=1-\tc(a\bI+\bK)^{-1}(\bI+\bM)^{-1}\br,
\end{align}
\end{subequations}
which remain invariant under similarity transformations. Indeed, suppose that matrix $\bar{\bK}$ is similar to $\bK$
using a transformation matrix $\bT$, i.e.,
\begin{subequations}
\label{trans-sim}
\begin{align}
\label{K1-K}
\bar{\bK}=\bT \bK \bT^{-1}.
\end{align}
We denote
\begin{align}
\label{Mrs-1}
\bar{\bM}=\bT \bM \bT^{-1},~~\bar{\br}=\bT \br,~~\bar{\tc}=\tc \bT^{-1}.
\end{align}
\end{subequations}
and then we easily get
\begin{subequations}
\begin{align}
& S^{(i,j)}=\tc\bK^j(\bI+\bM)^{-1}\bK^i\br
= \bar{\tc} \bar{\bK}^j(\bI+\bar{\bM})^{-1}\bar{\bK}^i\bar{\br}, \\
& S(a,b)=\tc(b\bI+\bK)^{-1}(\bI+\bM)^{-1}(a\bI+\bK)^{-1}\br \nn\\
& \qquad\quad  = \bar{\tc}(b\bI+\bar{\bK})^{-1}(\bI+\bar{\bM})^{-1}(a\bI+\bar{\bK})^{-1}\bar{\br}, \\
& V(a)=1-\tc(\bI+\bM)^{-1}(a\bI+\bK)^{-1}\br=1-\bar{\tc}(\bI+\bar{\bM})^{-1}(a\bI+\bar{\bK})^{-1}\bar{\br},
\end{align}
\end{subequations}
thus establishing the similarity invariance of these master functions.

By the DES \eqref{ABS-DES} we can derive shift relations for the fundamental functions
\eqref{ABS-MF}. As for the shift relations associated with the master function $S^{(i,j)}$,
the results are presented in the following proposition.
\begin{Proposition}
\label{ABS-Prop-Sij-sf}
For the master function $S^{(i,j)}$ defined by \eqref{ABS-Sij}, provided that the matrices $\bM, \bK$ and vectors $\br, \tc$ satisfy the DES \eqref{ABS-DES},
the following relations
\begin{subequations}
\label{ABS-Sij-sf}
\begin{align}
\label{ABS-Sij-sf-p}
& f_n p\wt{S}^{(i,j)}-\wt{S}^{(i,j+1)}=f_n pS^{(i,j)}+S^{(i+1,j)}-S^{(0,j)}\wt{S}^{(i,0)},\\
\label{ABS-Sij-sf-q}
& g_m q\wh{S}^{(i,j)}-\wh{S}^{(i,j+1)}=g_m qS^{(i,j)}+S^{(i+1,j)}-S^{(0,j)}\wh{S}^{(i,0)},
\end{align}
\end{subequations}
pertaining the shift operators $\wt{\phantom{a}}$ and $\wh{\phantom{a}}$ are valid.
\end{Proposition}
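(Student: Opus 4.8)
The plan is to reduce the two scalar shift relations to a single vector identity and then extract that identity from the DES \eqref{ABS-DES}. By the manifest symmetry between the two lattice directions it suffices to prove \eqref{ABS-Sij-sf-p}; the relation \eqref{ABS-Sij-sf-q} then follows verbatim after replacing $(f_n p,\wt{\phantom{a}})$ by $(g_m q,\wh{\phantom{a}})$ and invoking \eqref{ABS-evo-r-q} in place of \eqref{ABS-evo-r-p}. Writing $\bu^{(i)}:=(\bI+\bM)^{-1}\bK^i\br$, so that $S^{(i,j)}=\tc\bK^j\bu^{(i)}$ and $\wt{S}^{(i,j)}=\tc\bK^j\wt{\bu}^{(i)}$, the identity \eqref{ABS-Sij-sf-p} is exactly what one obtains by left-multiplying the vector identity
\begin{align}
\label{plan-vec}
(f_n p\bI-\bK)\wt{\bu}^{(i)}=f_n p\,\bu^{(i)}+\bu^{(i+1)}-\bu^{(0)}\big(\tc\wt{\bu}^{(i)}\big)
\end{align}
by the row vector $\tc\bK^j$, upon noting that $\tc\wt{\bu}^{(i)}=\wt{S}^{(i,0)}$ is a scalar and $\tc\bK^j\bu^{(0)}=S^{(0,j)}$. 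Thus the whole proposition is established once \eqref{plan-vec} is proved.

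Two preliminary consequences of the DES do all the work. First I would upgrade the given vector recursion \eqref{ABS-evo-r-p} to an evolution law for $\bM$ itself. Since $\bM$ solves \eqref{ABS-SE} at every lattice site, its $\wt{\phantom{a}}$-shift satisfies $\bK\wt{\bM}+\wt{\bM}\bK=\wt{\br}\,\tc$; multiplying this on the left by $(f_n p\bI-\bK)$, commuting that factor through $\bK$, and using \eqref{ABS-SE} with \eqref{ABS-evo-r-p} to rewrite $(f_n p\bI-\bK)\wt{\br}\,\tc=(f_n p\bI+\bK)\br\,\tc=(f_n p\bI+\bK)(\bK\bM+\bM\bK)$, one finds that $\bP:=(f_n p\bI-\bK)\wt{\bM}$ and $\bQ:=(f_n p\bI+\bK)\bM$ satisfy the same Sylvester equation $\bK\bP+\bP\bK=\bK\bQ+\bQ\bK$. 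Their difference then solves the homogeneous equation $\bK\bsX+\bsX\bK=0$, which by the standing assumption $\mathcal{E}(\bK)\cap\mathcal{E}(-\bK)=\varnothing$ has only the trivial solution $\bsX=0$; hence
\begin{align}
\label{plan-Mevo}
(f_n p\bI-\bK)\wt{\bM}=(f_n p\bI+\bK)\bM.
\end{align}
Second, \eqref{ABS-SE} yields the intertwining identity $(f_n p\bI+\bK)\bM-\bM(f_n p\bI-\bK)=\bK\bM+\bM\bK=\br\,\tc$, that is
\begin{align}
\label{plan-inter}
(f_n p\bI+\bK)\bM=\bM(f_n p\bI-\bK)+\br\,\tc.
\end{align}

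With these in hand the assembly of \eqref{plan-vec} is direct. Starting from $(\bI+\wt{\bM})\wt{\bu}^{(i)}=\bK^i\wt{\br}$, I left-multiply by $(f_n p\bI-\bK)$, commute this factor past $\bK^i$, and apply \eqref{ABS-evo-r-p} on the right and \eqref{plan-Mevo} on the left to get
\begin{align}
\label{plan-C}
(f_n p\bI-\bK)\wt{\bu}^{(i)}+(f_n p\bI+\bK)\bM\wt{\bu}^{(i)}=\bK^i(f_n p\bI+\bK)\br.
\end{align}
Now \eqref{plan-inter} rewrites the second term as $\bM(f_n p\bI-\bK)\wt{\bu}^{(i)}+\br(\tc\wt{\bu}^{(i)})$, so the left-hand side collapses to $(\bI+\bM)(f_n p\bI-\bK)\wt{\bu}^{(i)}+\br(\tc\wt{\bu}^{(i)})$; isolating the $(\bI+\bM)$-term and left-multiplying by $(\bI+\bM)^{-1}$ gives
\begin{align}
\label{plan-Cprime}
(f_n p\bI-\bK)\wt{\bu}^{(i)}=(\bI+\bM)^{-1}\bK^i(f_n p\bI+\bK)\br-(\bI+\bM)^{-1}\br\big(\tc\wt{\bu}^{(i)}\big).
\end{align}
Expanding $\bK^i(f_n p\bI+\bK)\br=f_n p\,\bK^i\br+\bK^{i+1}\br$ and recognizing $(\bI+\bM)^{-1}\bK^i\br=\bu^{(i)}$, $(\bI+\bM)^{-1}\bK^{i+1}\br=\bu^{(i+1)}$ and $(\bI+\bM)^{-1}\br=\bu^{(0)}$ turns \eqref{plan-Cprime} into precisely \eqref{plan-vec}.

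I expect the only genuinely delicate point to be the matrix evolution \eqref{plan-Mevo}: it is what lifts the scalar-vector recursion \eqref{ABS-evo-r-p} to a statement about $\bM$, and it is exactly here that the eigenvalue separation $\mathcal{E}(\bK)\cap\mathcal{E}(-\bK)=\varnothing$ (which also secures solvability of \eqref{ABS-SE}) is indispensable, since without uniqueness of the homogeneous Sylvester equation one cannot conclude $\bP=\bQ$. Everything afterward is bookkeeping, the single structural insight being that the intertwining identity \eqref{plan-inter} simultaneously produces the rank-one correction $\bu^{(0)}(\tc\wt{\bu}^{(i)})$ and liberates the factor $(\bI+\bM)$ so that it may be inverted.
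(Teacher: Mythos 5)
Your argument is correct and follows essentially the same route as the paper: you derive the matrix evolution $(f_n p\bI-\bK)\wt{\bM}=(f_n p\bI+\bK)\bM$ from the shifted Sylvester equation together with \eqref{ABS-evo-r-p} (making explicit the uniqueness argument via $\mathcal{E}(\bK)\cap\mathcal{E}(-\bK)=\varnothing$ that the paper leaves implicit), introduce the same auxiliary vector $\bu^{(i)}$, establish the same vector identity \eqref{ui-sh-a}, and contract with $\tc\bK^j$. The only cosmetic difference is that the paper fuses your two lemmas into the single relation $(f_n p\bI-\bK)\wt{\bM}-\bM(f_n p\bI-\bK)=\br\,\tc$ before applying it, whereas you apply them in sequence.
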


\begin{proof}

Here we just demonstrate the first shift relation, since the second one
is a similar relation with the replacements of $p$ by $q$, $\wt{\phantom{a}}$  by $\wh{\phantom{a}}$,
and $f_n$ by $g_m$.

To begin, we consider the shift relation of $\bM$.
Subtracting \eqref{ABS-SE} from \eqref{ABS-SE}$\wt{\phantom{a}}$, and using \eqref{ABS-SE} and \eqref{ABS-evo-r-p}, we have
\begin{align}
\label{ABS-M-sf-1}
(f_np\bI-\bK)\wt{\bM}=(f_np\bI+\bK)\bM.
\end{align}
Substituting the Sylvester equation \eqref{ABS-SE} into \eqref{ABS-M-sf-1} to replace
$\bK\bM$, we arrive at
\begin{align}
\label{ABS-M-sf-2}
(f_np\bI-\bK)\wt{\bM}-\bM(f_np\bI-\bK)=\br\tc.
\end{align}

To construct the shift relation \eqref{ABS-Sij-sf-p}, we introduce an auxiliary vector function
\begin{align}
\label{ABS-ui}
\bu^{(i)}=(\bI+\bM)^{-1}\bK^i\br, \quad i\in \mathbb{Z},
\end{align}
which relates to $S^{(i,j)}$ by
\begin{align}
\label{ABS-Sij-ui}
S^{(i,j)}=\tc\,\bK^j\bu^{(i)}.
\end{align}

Multiplying both sides of \eqref{ABS-ui} from the left by the
matrix $(\bI+\bM)$ and taking $\wt{\phantom{a}}$-shift, then the
utilization of \eqref{ABS-M-sf-2} leads to
\begin{align}
\label{ui-sh-a}
(f_np\bI-\bK)\wt{\bu}^{(i)}=f_np\bu^{(i)}+\bu^{(i+1)}-\bu^{(0)}\wt{S}^{(i,0)},
\end{align}
which moreover yields \eqref{ABS-Sij-sf-p} by left multiplying $\tc\bK^j$
and using \eqref{ABS-Sij-ui}.
\end{proof}

Noting that the symmetric property \eqref{ABS-Sijab-Symm}, one can easily deduce
the other two shift relations
\begin{subequations}
\label{ABS-Sij-sf-34}
\begin{align}
& f_n p\wt{S}^{(i,j)}-\wt{S}^{(i+1,j)}=f_n pS^{(i,j)}+S^{(i,j+1)}-S^{(i,0)}\wt{S}^{(0,j)}, \\
& g_m q\wh{S}^{(i,j)}-\wh{S}^{(i+1,j)}=g_m qS^{(i,j)}+S^{(i,j+1)}-S^{(i,0)}\wh{S}^{(0,j)}.
\end{align}
\end{subequations}

The following proposition reveals the shift relations of the master functions $S(a,b)$ and $V(a)$.
\begin{Proposition}
\label{ABS-Prop-SabVa-sf}
For the master functions $S(a,b)$ and $V(a)$ defined by \eqref{ABS-Sab} and \eqref{ABS-Va}, provided that the matrices $\bM, \bK$ and vectors $\br, \tc$ satisfy the DES \eqref{ABS-DES},
the following relations
\begin{subequations}
\label{ABS-SabVa-sf-1}
\begin{align}
\label{ABS-SabVa-1a}
& 1-(f_np+b)\wt{S}(a,b)+(f_np-a)S(a,b)=\wt{V}(a)V(b), \\
\label{ABS-SabVa-1b}
& 1-(g_mq+b)\wh{S}(a,b)+(g_mq-a)S(a,b)=\wh{V}(a)V(b).
\end{align}
\end{subequations}
pertaining the shift operators $\wt{\phantom{a}}$ and $\wh{\phantom{a}}$ are valid.
\end{Proposition}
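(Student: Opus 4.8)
The plan is to follow the pattern of the proof of Proposition~\ref{ABS-Prop-Sij-sf}: distill a clean operator identity for the one-step evolution of $\bI+\bM$, introduce an auxiliary column vector that carries both resolvents $(a\bI+\bK)^{-1}$ and $(\bI+\bM)^{-1}$, derive its shift relation, and then sandwich that relation between $\tc(b\bI+\bK)^{-1}$ and the identity to obtain the bilinear formula. As in the previous proof, I would establish only \eqref{ABS-SabVa-1a}, since \eqref{ABS-SabVa-1b} follows verbatim under the replacements $p\to q$, $f_n\to g_m$ and $\wt{\phantom{a}}\to\wh{\phantom{a}}$.

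First, starting from the already-derived relation \eqref{ABS-M-sf-2}, I would add and subtract $(f_np\bI-\bK)$ to recast it as the operator identity
\[
(f_np\bI-\bK)(\bI+\wt{\bM})=(\bI+\bM)(f_np\bI-\bK)+\br\,\tc. \qquad (\ast)
\]
Next I would introduce the auxiliary vector $\bu(a)=(\bI+\bM)^{-1}(a\bI+\bK)^{-1}\br$, so that $V(a)=1-\tc\,\bu(a)$ and $S(a,b)=\tc(b\bI+\bK)^{-1}\bu(a)$; since $\tc$ is a constant row vector, the tilde acts only on $\bM$ and $\br$, whence $\wt{V}(a)=1-\tc\,\wt{\bu}(a)$.

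Right-multiplying $(\ast)$ by $\wt{\bu}(a)$ and using $(\bI+\wt{\bM})\wt{\bu}(a)=(a\bI+\bK)^{-1}\wt{\br}$ together with the evolution \eqref{ABS-evo-r-p} and the commutativity of all polynomials in $\bK$, the left side collapses to $(a\bI+\bK)^{-1}(f_np\bI+\bK)\br$. Splitting $(f_np\bI+\bK)=(f_np-a)\bI+(a\bI+\bK)$ and solving for $(f_np\bI-\bK)\wt{\bu}(a)$ then yields the shift relation
\[
(f_np\bI-\bK)\wt{\bu}(a)=(f_np-a)\bu(a)+\wt{V}(a)(\bI+\bM)^{-1}\br. \qquad (A)
\]
Finally I would left-multiply $(A)$ by $\tc(b\bI+\bK)^{-1}$. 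On the left I split $(f_np\bI-\bK)=(f_np+b)\bI-(b\bI+\bK)$, producing $(f_np+b)\wt{S}(a,b)-\bigl(1-\wt{V}(a)\bigr)$; on the right the first term gives $(f_np-a)S(a,b)$, while for the second I invoke the symmetric form of $V$ in \eqref{ABS-Symm} to identify $\tc(b\bI+\bK)^{-1}(\bI+\bM)^{-1}\br=1-V(b)$, so it becomes $\wt{V}(a)\bigl(1-V(b)\bigr)$. The two $\wt{V}(a)$ contributions cancel, and rearranging the remaining terms is precisely \eqref{ABS-SabVa-1a}.

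The main obstacle I anticipate is step $(A)$: the factor $(f_np\bI-\bK)$ sits to the \emph{left} of $\wt{\bu}(a)$ rather than acting on $(\bI+\wt{\bM})\wt{\bu}(a)$, so one cannot simply read off $\wt{\bu}(a)$ by inverting $\bI+\wt{\bM}$. The key is to resist inverting $(f_np\bI-\bK)$ at all and instead carry it intact into the final step, where it is resolved against $(b\bI+\bK)^{-1}$ through the partial-fraction split $(f_np\bI-\bK)=(f_np+b)\bI-(b\bI+\bK)$; the analogous split against $(a\bI+\bK)$ is exactly what makes $(A)$ itself come out in closed form. A second, milder point is to keep in mind that $\tc$ is constant, so that the $\wt{\phantom{a}}$ in $\wt{V}(a)$ acts only on $\bM$ and $\br$.
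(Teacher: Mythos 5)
Your proposal is correct and follows essentially the same route as the paper: the same auxiliary vector $\bu(a)=(\bI+\bM)^{-1}(a\bI+\bK)^{-1}\br$, the same use of \eqref{ABS-M-sf-2} to obtain the intermediate shift relation $(f_np\bI-\bK)\wt{\bu}(a)=(f_np-a)\bu(a)+\bu^{(0)}\wt{V}(a)$, and the same final contraction with $\tc(b\bI+\bK)^{-1}$ using the symmetric form of $V(b)$. The operator identity $(\ast)$ you isolate is just the paper's \eqref{ABS-M-sf-2} rewritten, so no substantive difference remains.
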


\begin{proof}

We introduce an auxiliary vector function
\begin{align}
\label{ABS-ua}
\bu(a)=(\bI+\bM)^{-1}(a\bI+\bK)^{-1}\br,
\end{align}
namely
\begin{align}
\label{ABS-ua-Def}
(\bI+\bM)\bu(a)=(a\bI+\bK)^{-1}\br.
\end{align}
Multiplying \eqref{ABS-ua-Def}$\wt{\phantom{a}}$ from the left by a factor $f_n p\bI-\bK$
and using \eqref{ABS-M-sf-2}, then we have
\begin{align}
\label{ABS-ua-sf}
(f_np\bI-\bK)\wt{\bu}(a)=(f_np-a)\bu(a)+\bu^{(0)}\wt{V}(a),
\end{align}
which yields \eqref{ABS-SabVa-1a} after left-multiplying \eqref{ABS-ua-sf} by
$\tc(b\bI+\bK)^{-1}$ and using formulas $S(a,b)=\tc(b\bI+\bK)^{-1}\bu(a)$
and $V(b)=1-\tc(b\bI+\bK)^{-1}\bu^{(0)}$. The relation \eqref{ABS-SabVa-1b} can be derived
from \eqref{ABS-SabVa-1a} by replacing $(p,\wt{\phantom{a}},f_n)$ by $(q,\wh{\phantom{a}},g_m)$.
Thus we finish the verification.
\end{proof}

In a similar manner, noticing the symmetric property \eqref{ABS-Symm}, we also have
\begin{subequations}
\label{ABS-SabVa-sf-2}
\begin{align}
\label{ABS-SabVa-2a}
& 1-(f_np+a)\wt{S}(a,b)+(f_np-b)S(a,b)=V(a)\wt{V}(b), \\
\label{ABS-SabVa-2b}
& 1-(g_mq+a)\wh{S}(a,b)+(g_mq-b)S(a,b)=V(a)\wh{V}(b).
\end{align}
\end{subequations}

\subsection{The lattice KdV-type equations and NQC equation}
\label{sec:3}

We now introduce some variables
\begin{align}
\label{ABS-vari}
w=S^{(0,0)}, \quad v=S^{(-1,0)}-1, \quad z=S^{(-1,-1)}
-\sum_{i=0}^{n-1}(f_ip)^{-1}-\sum_{j=0}^{m-1}(g_jq)^{-1}+z_0,
\end{align}
where $z_0\in \mathbb{C}$. By a similar discussion as that in \cite{Nijhoff-ABS}, some lattice KdV-type equations
can be obtained from shift relations \eqref{ABS-Sij-sf}, \eqref{ABS-Sij-sf-34},
\eqref{ABS-SabVa-sf-1} and \eqref{ABS-SabVa-sf-2}, expressed in a closed-form and presented below.

\noindent{\it 1. lpKdV equation:}
\begin{subequations}
\begin{align}
\label{lpKdV-w}
(f_np-g_mq+\wh{w}-\wt{w})(f_np+g_mq+w-\wh{\wt{w}})=p^2-q^2,
\end{align}
which is rewritten as
\begin{align}
\label{lpKdV-mu}
(\wh{\mu}-\wt{\mu})(\mu-\wh{\wt{\mu}})=p^2-q^2,
\end{align}
\end{subequations}
under the point transformation
\begin{align}
\label{Tr-mu-u}
\mu=w-\sum_{i=0}^{n-1}f_ip-\sum_{j=0}^{m-1}g_jq+\mu_0, \quad \mu_0\in\mathbb{C}.
\end{align}

\noindent{\it 2. lpmKdV equation:}
\begin{subequations}
\label{lpmKdV-vnu}
\begin{align}
\label{eq:v}
\wh{\wt{v}}(f_np\wt{v}-g_mq\wh{v})=v(f_np\wh{v}-g_mq\wt{v}),
\end{align}
which is rewritten as
\begin{align}
\label{lpmKdV}
\nu(p\wh{\nu}-q\wt{\nu})=\wh{\wt{\nu}}(p\wt{\nu}-q\wh{\nu}),
\end{align}
\end{subequations}
under the point transformation
\begin{align}
\label{Tr-nu-v}
\nu=\prod^{n-1}_{i=0}f_i\prod^{m-1}_{j=0}g^{-1}_jv.
\end{align}

\noindent{\it 3. lSKdV equation:}
\begin{align}
\label{lSKdV}
p^2(z-\wt{z})(\wh{z}-\wh{\wt{z}})=q^2(z-\wh{z})(\wt{z}-\wh{\wt{z}}).
\end{align}

\noindent{\it 4. ``nonautonomous'' NQC equation:}
\begin{align}
\label{NQC}
\frac{1-(f_np+b)\wh{\wt{S}}(a,b)+(f_np-a)\wh{S}(a,b)}{1-(g_mq+b)\wh{\wt{S}}(a,b)+(g_mq-a)\wt{S}(a,b)}
=\frac{1-(g_mq+a)\wh{S}(a,b)+(g_mq-b)S(a,b)}{1-(f_np+a)\wt{S}(a,b)+(f_np-b)S(a,b)}.
\end{align}

\noindent{\bf Remark 1.} \textit{In the above lattice KdV-type equations, equations \eqref{lpKdV-w}
and \eqref{eq:v} are autonomous, since they can be transformed into equations \eqref{lpKdV-mu}
and \eqref{lpmKdV} through transformations \eqref{Tr-mu-u} and \eqref{Tr-nu-v}, respectively. The
lSKdV equation \eqref{lSKdV} itself is autonomous. However, for the NQC equation \eqref{NQC},
it is autonomous when $f_n=g_m=1$, while it is nonautonomous when $f_n=(-1)^n$ or $g_m=(-1)^m$
since it can not be transformed to the autonomous version.}

\noindent{\bf Remark 2.} \textit{According to
different values of $f_i$ and $g_j$, the term $-\sum\limits_{i=0}^{n-1}f_ip-\sum\limits_{j=0}^{m-1}g_jq+\mu_0$ in \eqref{Tr-mu-u}
becomes diverse seed solutions \eqref{H1-seed-solu} for the equation \eqref{lpKdV-mu}, i.e.,
\begin{align*}
\mu=
\begin{cases}
\text{linear seed solution}: -np-mq+\mu_0,~ & \text{if} \quad f_i=g_j=1, \\
\text{oscillatory seed solution}: (-1)^np/2+(-1)^mq/2+\mu_0,~ & \text{if} \quad f_i=(-1)^i,~g_j=(-1)^j, \\
\text{semi-oscillatory seed solution}: -np+(-1)^mq/2+\mu_0,~ & \text{if} \quad f_i=1,~g_j=(-1)^j, \\
\text{semi-oscillatory seed solution}: (-1)^np/2-mq+\mu_0,~ & \text{if} \quad f_i=(-1)^i,~g_j=1.
\end{cases}
\end{align*}
}

\subsection{$\mathrm{Q3}_{\delta}$ and degeneration}
\label{Subsec-Q3-dege}

It is well-established that the autonomous NQC equation yields a 4-to-1 relationship solution for
both $\mathrm{Q3}_{\delta=0}$ and $\mathrm{Q3}_{\delta}$ depending on the sign choices of two additional parameters, $a$ and $b$ \cite{Nijhoff-ABS}.
Despite the NQC equation \eqref{NQC} is nonautonomous when $f_n=(-1)^n$ or $g_m=(-1)^m$,
we present evidence that this equation still provides a four-term solution to $\mathrm{Q3}_{\delta}$.

\begin{Thm}
\label{Thm-Q3-solu}
The function
\begin{align}
\label{eq:Q3sol}
\begin{split}
u= &  A\digamma(a,b)\left[ 1-(a+b)S(a,b)\right]+B\digamma(a,-b)\left[ 1-(a-b)S(a,-b)\right]\\
& + C\digamma(-a,b)\left[ 1+(a-b)S(-a,b)\right]+ D\digamma(-a,-b)\left[ 1+(a+b)S(-a,-b)\right]
\end{split}
\end{align}
solves $\mathrm{Q3}_{\delta}$ \eqref{Q3}, where $S(\pm a,\pm b)$ are the solutions of
the NQC equation \eqref{NQC} with parameters $\pm a$, $\pm b$.
The function $\digamma(a,b)$ is defined by
\begin{align}
\label{eq:vpdef}
\digamma(a,b) =\prod^{n-1}_{i=0}\bigg(\frac{P}{(f_ip-a)(f_ip-b)}\bigg)\prod^{m-1}_{j=0}\bigg(\frac{Q}{(g_jq-a)(g_jq-b)}\bigg),
\end{align}
and $P,Q$ are given by \eqref{ell-Q3}; and $A$, $B$, $C$ and $D$ are constants subject to a single constraint:
\begin{align}\label{eq:ABCD}
AD(a+b)^2-BC(a-b)^2=-\delta^2/(16ab).
\end{align}
\end{Thm}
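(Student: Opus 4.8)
The plan is to verify $\mathrm{Q3}_\delta$ \eqref{Q3} by direct substitution, after organizing the four summands of \eqref{eq:Q3sol} into a single family of building blocks and exploiting the fact that the nonautonomous factors $f_i,g_j$ enter only through squares. I would write each term as $W(\alpha,\beta):=\digamma(\alpha,\beta)\left[1-(\alpha+\beta)S(\alpha,\beta)\right]$, with $(\alpha,\beta)$ ranging over $(\pm a,\pm b)$, so that $u=A\,W(a,b)+B\,W(a,-b)+C\,W(-a,b)+D\,W(-a,-b)$. From the product definition \eqref{eq:vpdef} one reads off the multiplicative shift rules
\begin{align*}
\wt{\digamma}(\alpha,\beta)=\frac{P}{(f_np-\alpha)(f_np-\beta)}\,\digamma(\alpha,\beta),\qquad
\wh{\digamma}(\alpha,\beta)=\frac{Q}{(g_mq-\alpha)(g_mq-\beta)}\,\digamma(\alpha,\beta).
\end{align*}

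First I would record the two identities that make the nonautonomous case work, namely
\begin{align*}
\digamma(a,b)\,\digamma(-a,-b)=1,\qquad \digamma(a,-b)\,\digamma(-a,b)=1 .
\end{align*}
These hold because the numerators in \eqref{eq:vpdef} are chosen so that $P^2=(p^2-a^2)(p^2-b^2)$ and $Q^2=(q^2-a^2)(q^2-b^2)$ (the content of the parametrization \eqref{ell-Q3}), whence each factor collapses to $P^2/\big((f_i^2p^2-a^2)(f_i^2p^2-b^2)\big)=1$ once $f_i^2=g_j^2=1$ is used; the sign of $f_i$ has disappeared. This is the precise point at which the ``fake'' nonautonomy becomes harmless.

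Next I would derive shift relations for the building blocks. Combining the $\digamma$-rules with the two forms \eqref{ABS-SabVa-sf-1} and \eqref{ABS-SabVa-sf-2} of Proposition \ref{ABS-Prop-SabVa-sf} (and the consequence that $S(\alpha,\beta)$ solves the NQC equation \eqref{NQC}, since both sides of \eqref{NQC} reduce to $\wh V(\beta)/\wt V(\beta)$), I would express $\wt W$, $\wh W$ and $\wh{\wt W}$ of each block so that the $S$-dependence is carried by the factors $V(\alpha),V(\beta)$ and their shifts. Substituting $u,\wt u,\wh u,\wh{\wt u}$ into $\mathrm{Q3}_\delta$ then produces a sum of bilinear products $W(\mu)\cdot(\text{shift of }W(\nu))$. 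The products sharing a common sign pattern ($\mu=\nu$ and the two ``adjacent'' pairings) carry a nonconstant prefactor and must cancel among the several bilinears of $\mathrm{Q3}_\delta$ by the shift relations; only the two ``opposite'' pairings $(a,b)\!\leftrightarrow\!(-a,-b)$ and $(a,-b)\!\leftrightarrow\!(-a,b)$ survive, and by the constancy identities their prefactors equal $1$, so they contribute the constant term of $\mathrm{Q3}_\delta$. Matching that constant against the $\delta^2$ term fixes the coefficient combination and yields exactly the constraint \eqref{eq:ABCD}.

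The main obstacle is the bookkeeping of the bilinear expansion: each of the six products $u\wh u,\wt u\wh{\wt u},u\wt u,\wh u\wh{\wt u},\wt u\wh u,u\wh{\wt u}$ in $\mathrm{Q3}_\delta$ splits into sixteen terms, and one must show that all $V$- and $S$-dependent contributions cancel, leaving only the two constant cross-terms with the correct coefficients $(a+b)^2$ and $(a-b)^2$. I expect the cleanest control to be conceptual rather than brute force: every relation used is the autonomous Cauchy-matrix relation with the lattice parameter replaced by the local value $f_np$ (resp.\ $g_mq$), and $\mathrm{Q3}_\delta$ \eqref{Q3} together with \eqref{ell-Q3} depends on the parameters only through the even quantities $P^2,Q^2$; hence the autonomous verification of \cite{Nijhoff-ABS} applies at each site, with $f_n^2=g_m^2=1$ guaranteeing both the site-independence of the resulting $\mathrm{Q3}_\delta$ and the constancy identities that generate its $\delta$-term. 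The delicate part of that reduction is the sign bookkeeping of $P,Q$ in the odd factors of $\digamma$, which I would handle by tracking $\prod_i f_i$ and $\prod_j g_j$ explicitly.
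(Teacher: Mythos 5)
Your proposal is correct and follows essentially the same route as the paper, which in fact omits the proof entirely and simply refers to the autonomous verification of Nijhoff--Atkinson--Hietarinta; your sketch supplies precisely the two ingredients needed to adapt that verification, namely that every Cauchy-matrix shift relation holds with the local parameters $f_np$, $g_mq$, and that the constancy identities $\digamma(a,b)\digamma(-a,-b)=\digamma(a,-b)\digamma(-a,b)=1$ survive because $f_i^2=g_j^2=1$ combined with $P^2=(p^2-a^2)(p^2-b^2)$, $Q^2=(q^2-a^2)(q^2-b^2)$ removes all sign dependence. Your identification of this as the precise point where the ``fake'' nonautonomy becomes harmless, and your check that both sides of the NQC equation reduce to $\wh{V}(b)/\wt{V}(b)$, are exactly right.
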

The proof of Theorem \ref{Thm-Q3-solu} can be demonstrated in a manner analogous to that presented in \cite{Nijhoff-ABS}. Consequently, the proof is omitted herein.

Similarly to the rational soliton case  described in \cite{Nijhoff-ABS},  it is possible to derive solutions for the ``lower''
lattice equations $\mathrm{Q2}$, $\mathrm{Q1}_{\delta}$, $\mathrm{H3}_{\delta}$,
$\mathrm{H2}$ and $\mathrm{H1}$ equations from the solution for $\mathrm{Q3}_{\delta}$ by adjusting the parameters $a$ and $b$ and appropriately modifying the coefficients $A$, $B$, $C$, and $D$. This process follows the scheme depicted in Figure 2.
\vspace{-5mm}
\begin{center}
\begin{displaymath}
\xymatrix{ \boxed{\mathrm{Q3_{\del}}} \ar[d] \ar[r] & \boxed{\mathrm{Q2}} \ar[d] \ar[r] & \boxed{\mathrm{Q1_{\del}}} \ar[d] \\
\boxed{\mathrm{H3_{\del}}} \ar[r] & \boxed{\mathrm{H2}} \ar[r] & \boxed{\mathrm{H1}} }
\end{displaymath}
\begin{minipage}{11cm}{\footnotesize\qquad\qquad\qquad\qquad\qquad
{FIGURE. 2} Degeneration relations}
\end{minipage}
\end{center}
The upper horizontal sequence in this scheme, involving the degenerations
of the $\mathrm{Q}$ equations, is obtained from performing careful limits of the type $b\rightarrow a$,
while the vertical limit from $\mathrm{Q}$ to $\mathrm{H}$ equations is obtained from the limits $a$ or $b\rightarrow 0$.

\vspace{.3cm}
\noindent{$\mathrm{Q3}_{\delta}\rightarrow \mathrm{Q2}$:} Inserting the degeneration
\begin{align}
b=a(1-2\epsilon), \quad u\rightarrow \delta\left(1/\epsilon+1+(1+2u)\epsilon\right)/(4a^2)
\end{align}
into $\mathrm{Q3}_{\delta}$ leads to the $\mathrm{Q2}$. Meanwhile, the four constants $A, B, C$ and $D$, constrained by
\eqref{eq:ABCD} are replaced by three new constants $A$, $D$ and $\xi_0$ as
\begin{align}
& A \rightarrow \delta A\epsilon/(4a^2), \quad
B \rightarrow \delta\left(1/\epsilon+1-\xi_0+((3+\xi_0^2)/2+2AD)\epsilon\right)/(8a^2), \\
& C \rightarrow \delta\left(1/\epsilon+1+\xi_0+((3+\xi_0^2)/2+2AD)\epsilon\right)/(8a^2), \quad
D \rightarrow \delta D\epsilon/(4a^2).
\end{align}
Then the solution to $\mathrm{Q2}$ reads
\begin{align}
\label{eq:Q2sol}
u =& ((\xi+\xi_0)^2+1)/4+a(\xi+\xi_0)S(-a,a)+a^2\left(Z(a,-a)+Z(-a,a)\right)+AD \\
& +\big(A\rho(a)(1-2aS(a,a))+D\rho(-a)(1+2aS(-a,-a))\big)/2,
\end{align}
where
\begin{subequations}
\begin{align}
\label{eq:xidef}
& \xi=\xi_{n,m}=2a\left(\sum^{n-1}_{i=0}(f_ip)/(a^2-p^2)+\sum^{m-1}_{j=0}(g_jq)/(a^2-q^2)\right), \\
& Z(a,-a)=-\partial_bS(a,b)|_{b=-a}, \quad Z(-a,a)=Z(a,-a)|_{a\rightarrow -a}, \\
& \rho(a)=\prod^{n-1}_{i=0}\bigg(\frac{f_ip+a}{f_ip-a}\bigg)\prod^{m-1}_{j=0}\bigg(\frac{g_jq+a}{g_jq-a}\bigg).
\end{align}
\end{subequations}

\vspace{.3cm}
\noindent{$\mathrm{Q2}\rightarrow \mathrm{Q1}_{\delta}$:} By taking the degeneration
\begin{align}
u\rightarrow \delta^2/(4\epsilon^2)+u/\epsilon,
\end{align}
one can deduce the $\mathrm{Q1}_{\delta}$ from $\mathrm{Q2}$. Replacing constants $A$, $D$ and $\xi_0$ by
\begin{align}
A \rightarrow 2A/\epsilon, \quad
D \rightarrow 2D/\epsilon, \quad
\xi_0 \rightarrow \xi_0+2B/\epsilon,
\end{align}
then we find the solutions for $\mathrm{Q1}_{\delta}$
\begin{align}
u = A\rho(a)(1-2aS(a,a))+B(\xi+\xi_0+2aS(-a,a))+D\rho(-a)(1+2aS(-a,-a)),
\end{align}
where the constants $A, B, D$ and $\xi_0$ satisfy the constraint
\begin{align}
AD+B^2/4=\delta^2/16.
\end{align}

\vspace{.3cm}
\noindent{$\mathrm{Q3}_{\delta}\rightarrow \mathrm{H3}_{\delta}$:} Implementing
\begin{align}
b=1/\epsilon^2, \quad u\rightarrow \sqrt{\delta}\epsilon^3 u/2
\end{align}
and
\begin{align}
A\rightarrow \epsilon^3\sqrt{\delta}A/2, \quad
B\rightarrow \epsilon^3\sqrt{\delta}B/2, \quad
C\rightarrow \epsilon^3\sqrt{\delta}C/2, \quad
D\rightarrow \epsilon^3\sqrt{\delta}D/2,
\end{align}
the solution to $\mathrm{H3}_{\delta}$ is formulated as
\begin{align}
u=(A+(-1)^{n+m}B)\vartheta V(a)+((-1)^{n+m}C+D)\vartheta^{-1}V(-a),
\end{align}
in which function $\vartheta$ is
\begin{align}
\vartheta=\prod^{n-1}_{i=0}\left(\frac{P}{a-f_ip}\right)\prod^{m-1}_{j=0}\left(\frac{Q}{a-g_jq}\right),
\end{align}
where $P,Q$ are defined by \eqref{eq:parcurves} and the constants $A$, $B$, $C$ and $D$ are subject to a single constraint
\begin{align}
AD-BC=-\delta/(4a).
\end{align}

\vspace{.3cm}
\noindent{$\mathrm{Q2}\rightarrow \mathrm{H2}$:} To derive the solution proposed by $\mathrm{H2}$ from that formulated by $\mathrm{Q2}$, we consider
\begin{align}
a=1/\epsilon, \quad u\rightarrow 1/4+\epsilon^2 u.
\end{align}
The degenerations of the constants $A, D$ and $\xi_0$ are
\begin{align}
A\rightarrow  A(\epsilon+\zeta_1\epsilon^2/2), \quad D\rightarrow A(-\epsilon+\zeta_1\epsilon^2/2), \quad
\xi_0\rightarrow \epsilon \zeta_0,
\end{align}
where $A$, $\zeta_0$ and $\zeta_1$ are unconstraint constants. The H2 solution is described as
\begin{align}
u =(\zeta+\zeta_0)^2/4 -(\zeta+\zeta_0) S^{(0,0)} + 2S^{(0,1)}-A^2+(-1)^{n+m}A(\zeta+\zeta_1-2S^{(0,0)}),
\end{align}
where $\zeta=2\big(\sum\limits^{n-1}_{i=0}f_ip+\sum\limits^{m-1}_{j=0}g_jq\big)$.

\vspace{.3cm}
\noindent{$\mathrm{Q1}_{\delta}\rightarrow \mathrm{H1}$:} The degeneration from $\mathrm{Q1}_{\delta}$ solution to
$\mathrm{H1}$ solution can be obtained by setting
\begin{align}
a=1/\epsilon, \quad u\rightarrow \epsilon \delta u,
\end{align}
as well as
\begin{align}
A\rightarrow \delta A(1+\zeta_1\epsilon)/2, \quad D\rightarrow \delta A(-1+\zeta_1\epsilon)/2,  \quad
B\rightarrow \delta B, \quad \xi_0 \rightarrow \epsilon \zeta_0.
\end{align}
The resulting solution of $\mathrm{H1}$ is of the form
\begin{align}
u=B(\zeta+\zeta_0-2S^{(0,0)})+(-1)^{n+m}A(\zeta+\zeta_1-2S^{(0,0)}),
\end{align}
where $\zeta_0$, $\zeta_1$, $A$ and $B$ subject to a single constraint
\begin{align}
A^2-B^2=-1/4.
\end{align}

\subsection{Exact solutions}

According to the analysis of the above subsections, we recognize that solutions to the
ABS list \eqref{ABS} are given by the master functions $S^{(i,j)},S(a,b)$ and $V(a)$, where
$\tc, \br, \bM$ and $\bK$ are defined by the DES \eqref{ABS-DES}.
Therefore, to derive exact solutions for these equations, we just need to solve the DES \eqref{ABS-DES}. Because of
the similarity invariance of these master functions and the covariance of the DES \eqref{ABS-DES} under
transformations \eqref{trans-sim}, here we turn to solve \eqref{ABS-DES} with $\bK$ being its Jordan canonical form, i.e.
\begin{subequations}
\label{ABS-DES-JC}
\begin{align}
\label{ABS-SE-JC}
& \Ga \bM+\bM\Ga=\br\, \tc, \\
\label{ABS-r-JC-a}
& (f_np\bI-\Ga)\wt{\br}=(f_np\bI+\Ga)\br,  \\
\label{ABS-r-JC-b}
& (g_mq\bI-\Ga)\wh{\br}=(g_mq\bI+\Ga)\br,
\end{align}
\end{subequations}
where $\Ga$ is the Jordan canonical form of the matrix $\bK$, satisfying $\mathcal{E}(\Ga)\bigcap \mathcal{E}(-\Ga)=\varnothing$.

Equations \eqref{ABS-r-JC-a} and \eqref{ABS-r-JC-b} are linear and imply
\begin{align}
\br=(f_np\bI+\Ga)(f_np\bI-\Ga)^{-1}(g_mq\bI+\Ga)(g_mq\bI-\Ga)^{-1}\br^0,
\end{align}
where $\br^0$ is a $N$-th constant column vector. Since the Sylvester equation
\eqref{ABS-SE-JC} was solved by factorizing $\bM$ into triplet $\bF\bG\bH$ in \cite{ZZ-SAM-2013},
here we just list the most general mixed solutions
for $\br$ and $\bM$ (A set of notations is given in the Appendix B).
\begin{Thm}
\label{T:3}
For the equation set \eqref{ABS-DES-JC} with generic
\begin{align}
\label{Ga-gen-T}
\Ga=\mathrm{Diag}\bigl(\Ga^{\tyb{N$_1$}}_{\ty{D}}(\{k_j\}^{N_1}_{1}),
\Ga^{\tyb{N$_2$}}_{\ty{J}}(k_{N_1+1}),\Ga^{\tyb{N$_3$}}_{\ty{J}}(k_{N_1+2}),\cdots,
\Ga^{\tyb{N$_s$}}_{\ty{J}}(k_{N_1+(s-1)})\bigr)
\end{align}
and
\begin{align}
\label{ct-T}
\tc=(c_1,c_2,\cdots,c_{N_1},c_{N_1+1},\cdots, c_{N_1+N_2+\cdots+N_s}),
\end{align}
we have solutions
\begin{subequations}
\label{r-M-g}
\begin{align}
\br=\left(
\begin{array}{c}
\br_{\ty{D}}^{\tyb{N$_1$}}(\{k_j\}_{1}^{N_1})\\
\br_{\ty{J}}^{\tyb{N$_2$}}(k_{N_1+1})\\
\br_{\ty{J}}^{\tyb{N$_3$}}(k_{N_1+2})\\
\vdots\\
\br_{\ty{J}}^{\tyb{N$_s$}}(k_{N_1+(s-1)})
\end{array}
\right),~~~
\bM=\bF\bG \bH,
\end{align}
where
\begin{align}
&\bF=\mathrm{Diag}\bigl(
\bF^{\tyb{N$_1$}}_{\ty{D}}(\{k_j\}^{N_1}_{1}),
\bF^{\tyb{N$_2$}}_{\ty{J}}(k_{N_1+1}),\bF^{\tyb{N$_3$}}_{\ty{J}}(k_{N_1+2}),\cdots,
\bF^{\tyb{N$_s$}}_{\ty{J}}(k_{N_1+(s-1)})
\bigr),\label{r-M-g-F}\\
&\bH=\mathrm{Diag}\bigl(
\bH^{\tyb{N$_1$}}_{\ty{D}}(\{c_j\}^{N_1}_{1}),
\bH^{\tyb{N$_2$}}_{\ty{J}}(\{c_j\}^{N_1+N_2}_{N_1+1}),
%\bH^{\tyb{N$_3$}}_{\ty{J}}(\{c_j\}^{N_1+N_2+N_3}_{N_1+N_2+1}),
\cdots,
\bH^{\tyb{N$_s$}}_{\ty{J}}(\{c_j\}^{N_1+N_2+\cdots+N_s}_{N_1+N_2+\cdots+N_{s-1}+1})\bigr),\label{r-M-g-H}
\end{align}
and $\bG$ is a symmetric matrix with block structure
\begin{align}
\label{r-M-g-G1}
\bG=\bG^T=(\bG_{i,j})_{s\times s}
\end{align}
with
\begin{align}
\begin{array}{ll}
\bG_{1,1}=\bG^{\tyb{N$_1$}}_{\ty{D}}(\{k_j\}^{N_1}_{1}),&~\\
\bG_{1,j}=\bG_{j,1}^T=\bG^{\tyb{N$_1$,N$_j$}}_{\ty{DJ}}(\{k_j\}^{N_1}_{1};k_{N_{j-1}+1}),&~~(1<j\leq s), \\
\bG_{i,j}=\bG_{j,i}^T=\bG^{\tyb{N$_i$,N$_j$}}_{\ty{JJ}}(k_{N_{i-1}+1},k_{N_{j-1}+1}),&~~(1<i\leq j\leq s).
\end{array}
\label{r-M-g-G2}
\end{align}
\end{subequations}
Besides, in addition to $\Ga$, $\tc$, $\br$ and $\bM$ mentioned above, the pair
\begin{align}\label{ArM}
\{\mathcal{A}\br, \, \mathcal{A}\bM\}
\end{align}
with the same $\Ga$ and $\tc$ is also a solution to the equation set \eqref{ABS-DES-JC}. $\mathcal{A}$ in \eqref{ArM} is of the form
\[\mathcal{A}=\mathrm{Diag}(\bI,\mathcal{A}_2,\mathcal{A}_3,\cdots,\mathcal{A}_s),\]
in which $\mathcal{A}_j$ is
a $N_j$-th order constant lower triangular Toeplitz matrix (cf. \cite{ZDJ-Wron,ZZSZ}).
\end{Thm}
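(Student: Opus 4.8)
The plan is to read \eqref{ABS-DES-JC} as two decoupled problems: the linear dispersion relations \eqref{ABS-r-JC-a}--\eqref{ABS-r-JC-b} that fix $\br$, and the Sylvester equation \eqref{ABS-SE-JC} that then fixes $\bM$ once $\br$ and $\tc$ are known. First I would solve the linear part. Since \eqref{ABS-r-JC-a}--\eqref{ABS-r-JC-b} are first order in the shifts and all the coefficient matrices are rational functions of the single matrix $\Ga$, they commute, and the solution is the ordered (hence order-independent) product
\begin{align*}
\br=\prod^{n-1}_{i=0}(f_ip\bI-\Ga)^{-1}(f_ip\bI+\Ga)\prod^{m-1}_{j=0}(g_jq\bI-\Ga)^{-1}(g_jq\bI+\Ga)\,\br^0 .
\end{align*}
Evaluated on a diagonal block this reproduces, component-wise, exactly the ``fake'' plane wave factor $\rho(k_j)$ of \eqref{ABS-pwf}, while on a Jordan block the nilpotent part generates the $k$-derivatives of $\rho(k)$; this is the content of the vector $\br$ displayed in \eqref{r-M-g} together with the notation of Appendix B.

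With $\br$ and $\tc$ in hand, the second step is to verify that $\bM=\bF\bG\bH$ solves \eqref{ABS-SE-JC}. Because $\Ga$, $\bF$ and $\bH$ are all block diagonal, the equation $\Ga\bM+\bM\Ga=\br\,\tc$ decouples into one identity per $(i,j)$ block of $\bG$. On the pure diagonal block this is the elementary Cauchy identity $(k_i+k_j)G_{ij}=1$, so that $(\Ga\bM+\bM\Ga)_{ij}=r_ic_j$; on the Jordan--Jordan and diagonal--Jordan blocks one uses the corresponding confluent Cauchy-matrix identities, where the entries of $\bG$ carry the appropriate $k$-derivatives. This block-by-block verification is precisely the computation carried out in \cite{ZZ-SAM-2013}, so I would organize it as a citation supplemented by a short check on a representative mixed block rather than reproduce the full bookkeeping.

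The genuinely self-contained part is the closure statement for the pair $\{\mathcal{A}\br,\mathcal{A}\bM\}$, and the organizing observation is that $\mathcal{A}$ commutes with $\Ga$. Indeed $\mathcal{A}$ is block diagonal, equal to the identity on the diagonal block and to a lower-triangular Toeplitz matrix on each Jordan block. Writing such a block as $k\bI+\bN$ with $\bN$ the nilpotent shift, every lower-triangular Toeplitz matrix is a polynomial in $\bN$ and hence commutes with $k\bI+\bN$; therefore $\mathcal{A}\Ga=\Ga\mathcal{A}$. Since $\mathcal{A}$ is moreover constant in $n,m$, it commutes with the shifts, so $\widetilde{\mathcal{A}\br}=\mathcal{A}\wt{\br}$ and $\widehat{\mathcal{A}\br}=\mathcal{A}\wh{\br}$; multiplying \eqref{ABS-r-JC-a}--\eqref{ABS-r-JC-b} on the left by $\mathcal{A}$ and commuting it through $f_np\bI\mp\Ga$ then shows that $\mathcal{A}\br$ obeys the same dispersion relations. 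For the Sylvester equation I would use uniqueness: since $\mathcal{E}(\Ga)\cap\mathcal{E}(-\Ga)=\varnothing$, the solution of \eqref{ABS-SE-JC} with right-hand side $(\mathcal{A}\br)\tc$ is unique, and
\begin{align*}
\Ga(\mathcal{A}\bM)+(\mathcal{A}\bM)\Ga=\mathcal{A}(\Ga\bM+\bM\Ga)=\mathcal{A}\br\,\tc=(\mathcal{A}\br)\tc
\end{align*}
exhibits $\mathcal{A}\bM$ as that solution. Because the lower-triangular Toeplitz matrices exactly span the homogeneous solution space of the linear recurrences on each Jordan block, this simultaneously explains why $\mathcal{A}\br$ is the \emph{most general} admissible $\br$ and hence why the listed pairs are exhaustive.

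I expect the main obstacle to lie not in this commutation argument but in the first two steps: setting up the confluent Cauchy structure of $\bG$ for arbitrary Jordan block sizes and confirming that the generalized identities close the Sylvester equation on the mixed $\bG_{DJ}$ and $\bG_{JJ}$ blocks. Tracking the derivative orders and the Toeplitz arrangement of the $c$-entries in $\bH$ is where errors accumulate, and it is exactly this bookkeeping that \cite{ZZ-SAM-2013} handles; I would therefore lean on that reference for the detailed factorization and devote the new argument to the commutation identity $\mathcal{A}\Ga=\Ga\mathcal{A}$ that upgrades a single solution to the Toeplitz-parametrized family.
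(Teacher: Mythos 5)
Your proposal is correct and follows essentially the same route as the paper, which itself gives no proof of Theorem~\ref{T:3} and simply defers the Sylvester factorization $\bM=\bF\bG\bH$ and the confluent Cauchy-matrix identities to \cite{ZZ-SAM-2013}; the only genuinely new ingredients here are the product form of $\br$ coming from the ``fake'' plane wave factor \eqref{ABS-pwf} (which leaves the Sylvester equation untouched) and the Toeplitz family $\{\mathcal{A}\br,\mathcal{A}\bM\}$, and your commutation argument $\mathcal{A}\Ga=\Ga\mathcal{A}$ combined with uniqueness of the Sylvester solution under $\mathcal{E}(\Ga)\cap\mathcal{E}(-\Ga)=\varnothing$ handles the latter cleanly. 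The only loosely worded point is the closing claim that this ``explains why the listed pairs are exhaustive'' --- that exhaustiveness is not part of the stated theorem and would require a separate argument that every admissible initial vector $\br^{0}$ on a Jordan block is of the form $\mathcal{A}_j$ applied to the reference vector $\br^{\tyb{N$_j$}}_{\ty{J}}$ --- but this does not affect the validity of the proof of what is actually asserted.
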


\noindent{\bf Remark 3.} \textit{When $f_n=g_m=1$, we obtain the usual soliton-Jordan mixed solutions, which have been
reported in \cite{ZZ-SAM-2013}. While when $f_n=(-1)^n$ or $g_m=(-1)^{m}$, then we can get the (semi-)oscillatory solutions for the
ABS lattice list \eqref{ABS}.}

Next several examples of solutions for the lpKdV (or $\mathrm{H1}$) equation \eqref{lpKdV-mu} are listed,
with the notation
\begin{align}
\rho_{ij}=\rho_{i}c_{j}, \quad e^{A_{12}}=\bigg(\frac{k_1-k_2}{k_1+k_2}\bigg)^2, \quad
\eta_1=\sum^{n-1}_{i=0}\frac{2f_ip}{p^2-k_1^{2}}+\sum^{m-1}_{j=0}\frac{2g_jq}{q^2-k_1^{2}}.
\end{align}
In the case of $\Ga=k_1$, we have
\begin{align}
\label{ABS-1SS}
\mu=\frac{2k_1\rho_{11}}{2k_1+\rho_{11}}-\sum_{i=0}^{n-1}f_ip-\sum_{j=0}^{m-1}g_jq+\mu_0.
\end{align}
In the case of $\Ga=\mathrm{Diag}(k_1,k_2)$, the corresponding solution reads
\begin{align}
\label{ABS-2SS}
\mu=\frac{4k_1k_2(\rho_{11}+\rho_{22})+2(k_1+k_2)e^{A_{12}}\rho_{11}\rho_{22}}{4k_1k_2+2k_2\rho_{11}+2k_1\rho_{22}+e^{A_{12}}\rho_{11}\rho_{22}}
-\sum_{i=0}^{n-1}f_ip-\sum_{j=0}^{m-1}g_jq+\mu_0.
\end{align}
In the case of $\Ga=\left(\begin{array}{cc}
k_1 & 0   \\
1  & k_1
\end{array}\right)$, we get the simplest Jordan-block solution
\begin{align}
\label{ABS-JBS}
\mu=\frac{16(\rho_{11}+k_1^{4}\eta_1\rho_{12})-4k_1\rho_{12}^{2}}{16k_1^{4}+8k_1^{3}(\rho_{11}+\eta_1\rho_{12})-8k_1^{2}\rho_{12}-\rho_{12}^{2}}
-\sum_{i=0}^{n-1}f_ip-\sum_{j=0}^{m-1}g_jq+\mu_0.
\end{align}

The solutions \eqref{ABS-1SS}-\eqref{ABS-JBS} have a linear or (semi-)oscillatory
background part $-\sum\limits_{i=0}^{n-1}f_ip-\sum\limits_{j=0}^{m-1}g_jq+\mu_0$. In order to show the figures of these solutions
we ignore this part and only illustrate the first part $w$. Figure 3 exhibits the solution
$w$ with a linear seed solution $-pn-qm+\mu_0$, where we have taken $f_i=g_j=1$ in the background part.

%=====================Fig 3==============================
\vskip20pt
\begin{center}
\begin{picture}(120,100)
\put(-180,-23){\resizebox{!}{3.5cm}{\includegraphics{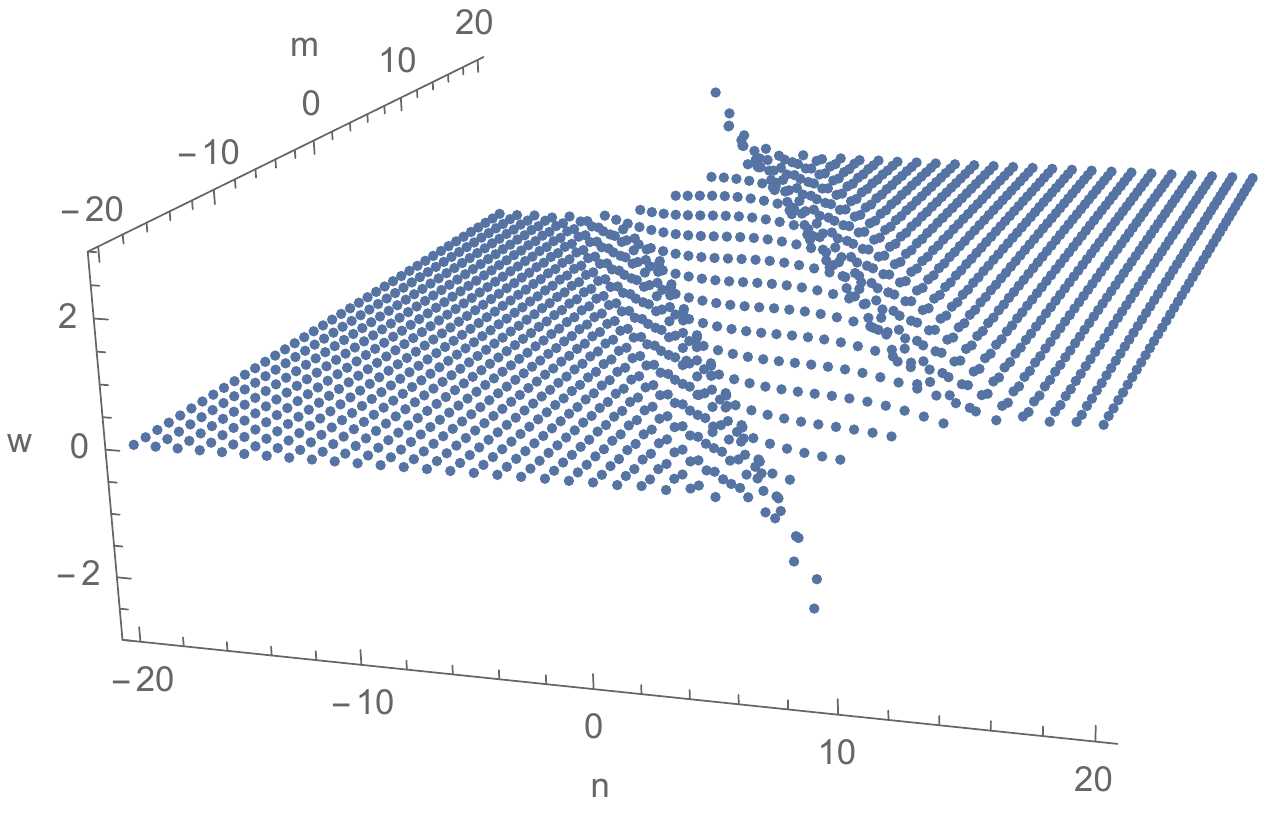}}}
\put(-10,-23){\resizebox{!}{3.5cm}{\includegraphics{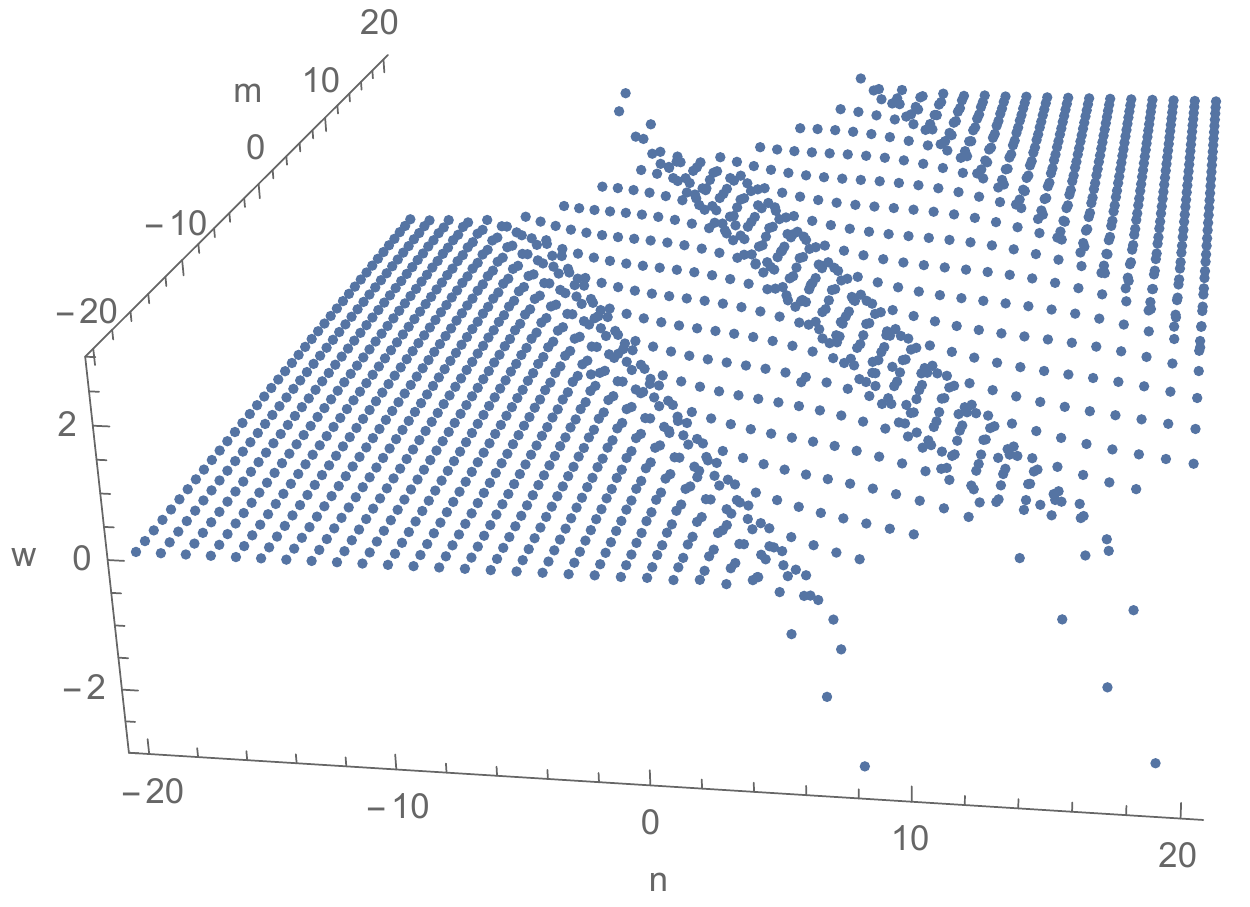}}}
\put(150,-23){\resizebox{!}{3.5cm}{\includegraphics{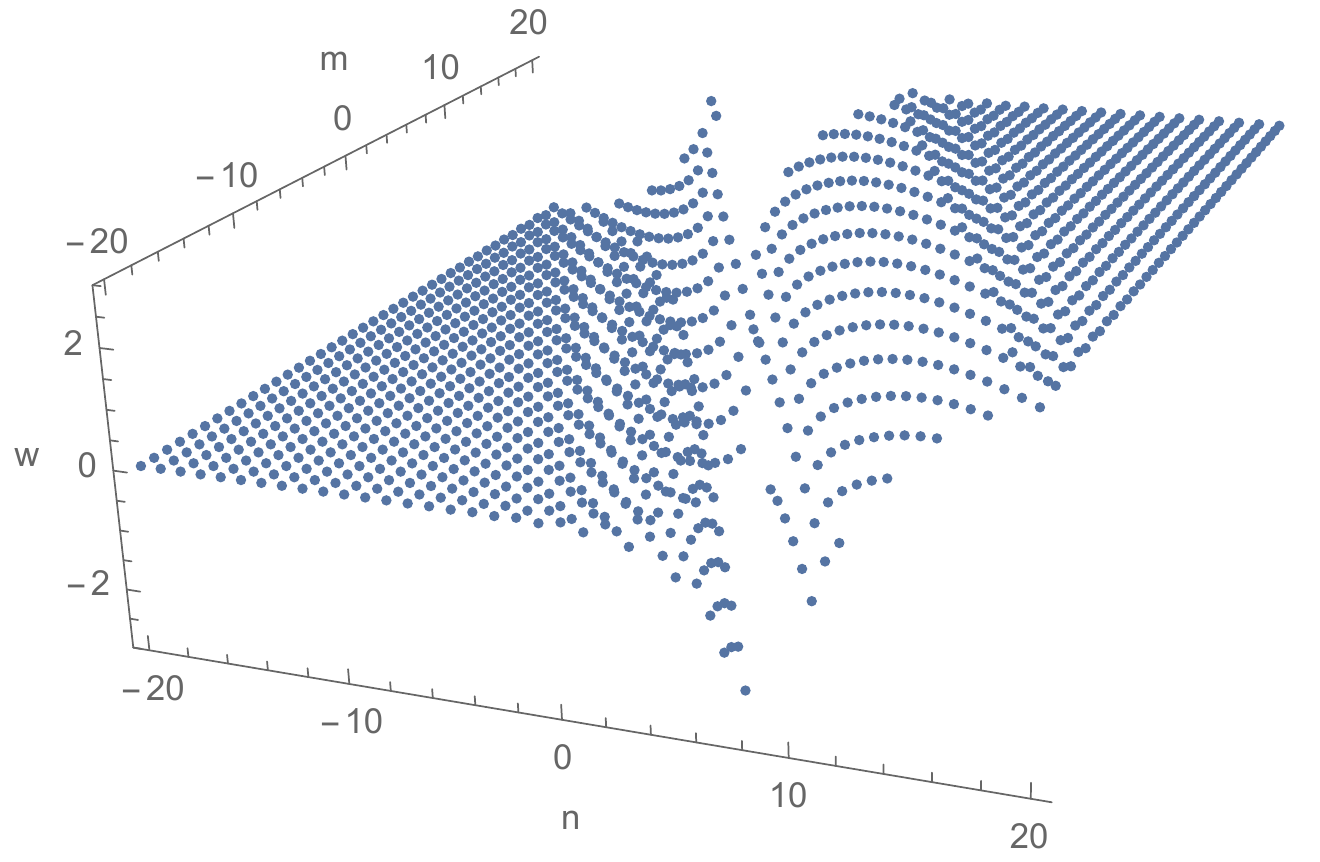}}}
\end{picture}
\end{center}
\vskip 20pt
\begin{center}
\begin{minipage}{15cm}{\footnotesize
\qquad\qquad\quad(a)\qquad\qquad\qquad\qquad\qquad\qquad\qquad\qquad (b) \qquad\qquad\qquad\qquad\qquad\qquad\qquad\quad\quad (c)\\
{FIGURE 3.} Shape and motion of solutions $w$ given by \eqref{ABS-1SS}-\eqref{ABS-JBS} with $p=0.1, q=0.2$ and $\rho_i^0c_j=1$:
(a) One-soliton solution \eqref{ABS-1SS} with $k_1=0.7$; (b) Two-soliton solution \eqref{ABS-2SS} with $k_1=0.7$ and $k_2=0.6$;
(c) Jordan-block solution \eqref{ABS-JBS} with $k_1=0.7$.}
\end{minipage}
\end{center}
%====================Fig 3===============================

For $(f_i,g_j)=((-1)^i,(-1)^j)$, $w$ (also $\mu$) takes four constant values on all elementary quadrilaterals on $\mathbb{Z}^2$
in terms of the parity of $n$ and $m$. For instance,
$w$ given by \eqref{ABS-1SS} can be expressed in Table 1.
\begin{center}
\footnotesize \setlength{\tabcolsep}{8pt}
\renewcommand{\arraystretch}{1.5}
\begin{tabular}[htbp]{|l|l|l|l|}
\hline
$(n,m)$
& Solution $w$
\\
\hline
(even,even)
& $w=\frac{2k_1\rho^0_{1}c_1}{2k_1+\rho^0_{1}c_1}$ \\
\hline
(odd,even)
& $w=\frac{2k_1(p+k_1)\rho^0_{1}c_1}{2k_1(p-k_1)+(p+k_1)\rho^0_{1}c_1}$\\
\hline
(even,odd)
& $w=\frac{2k_1(q+k_1)\rho^0_{1}c_1}{2k_1(q-k_1)+(q+k_1)\rho^0_{1}c_1} $\\
\hline
(odd,odd)
& $w=\frac{2k_1(p+k_1)(q+k_1)\rho^0_{1}c_1}{2k_1(p-k_1)(q-k_1)+(p+k_1)(q+k_1)\rho^0_{1}c_1}$ \\
\hline
\end{tabular}
\end{center}
\begin{center}
\begin{minipage}{9cm}{\footnotesize
{TABLE 1. \emph{$w$ given by \eqref{ABS-1SS} as $(f_i,g_j)=((-1)^i,(-1)^j)$.}}}
\end{minipage}
\end{center}

It is worth to mention that for a given variable $w$, the other three values in Table 1 are appropriately associated with $\wt{w}$, $\wh{w}$ and $\wh{\wt{w}}$.
In addition, $w$ possesses periodic property
for the discrete independent variable $n$ or $m$ with minimal positive period $2$. Solution for H1 in this case is sketched in Figure 4.
These properties also hold for the solutions to lpmKdV \eqref{lpmKdV-vnu}, lSKdV \eqref{lSKdV}, nonautonomous NQC \eqref{NQC} and
other equations in the ABS list \eqref{ABS}. Figure 5 depicts the behavior of semi-oscillatory solution given by \eqref{ABS-1SS}
with $(f_i,g_j)=(1,(-1)^j)$.

%=====================Fig 4==============================
\vskip20pt
\begin{center}
\begin{picture}(120,100)
\put(-180,-23){\resizebox{!}{3.5cm}{\includegraphics{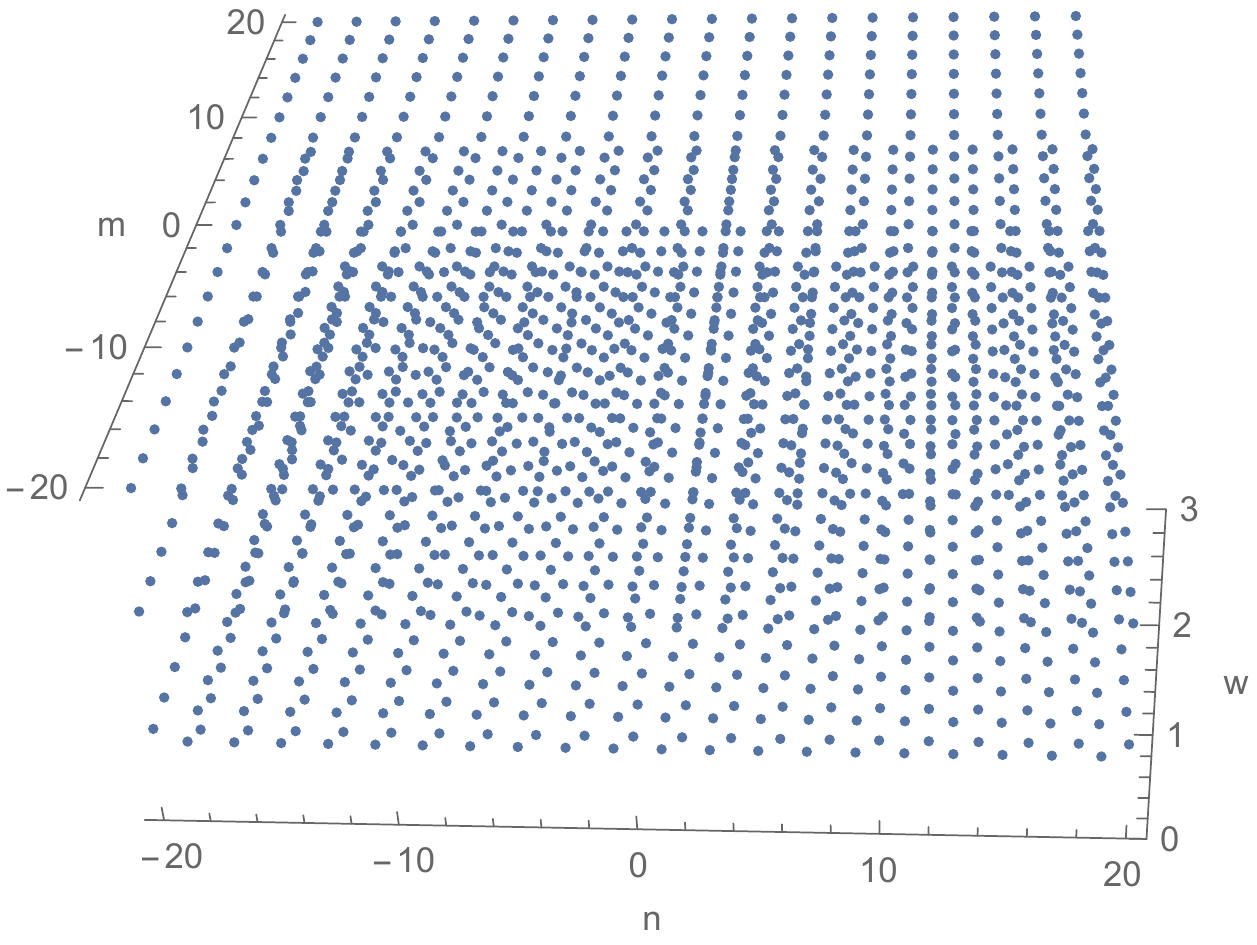}}}
\put(-20,-23){\resizebox{!}{3.5cm}{\includegraphics{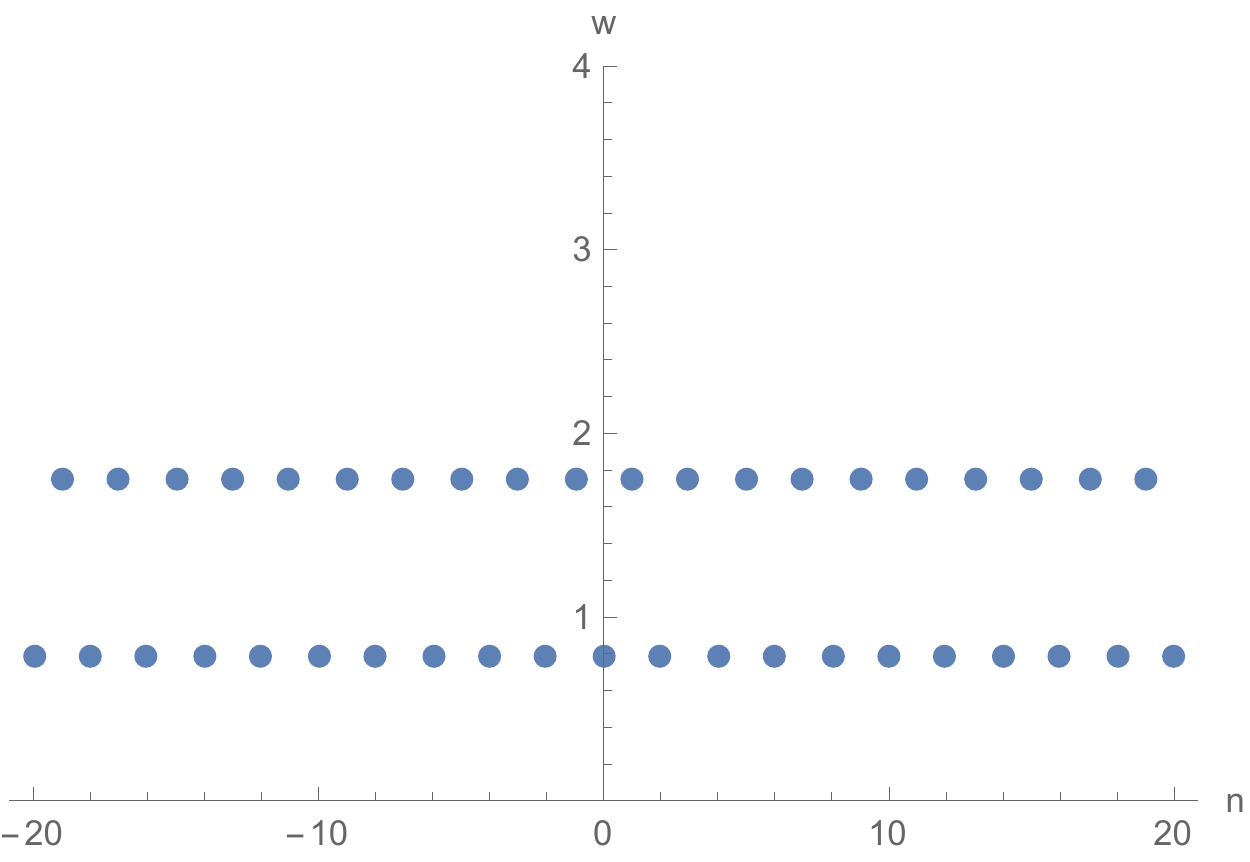}}}
\put(150,-23){\resizebox{!}{3.5cm}{\includegraphics{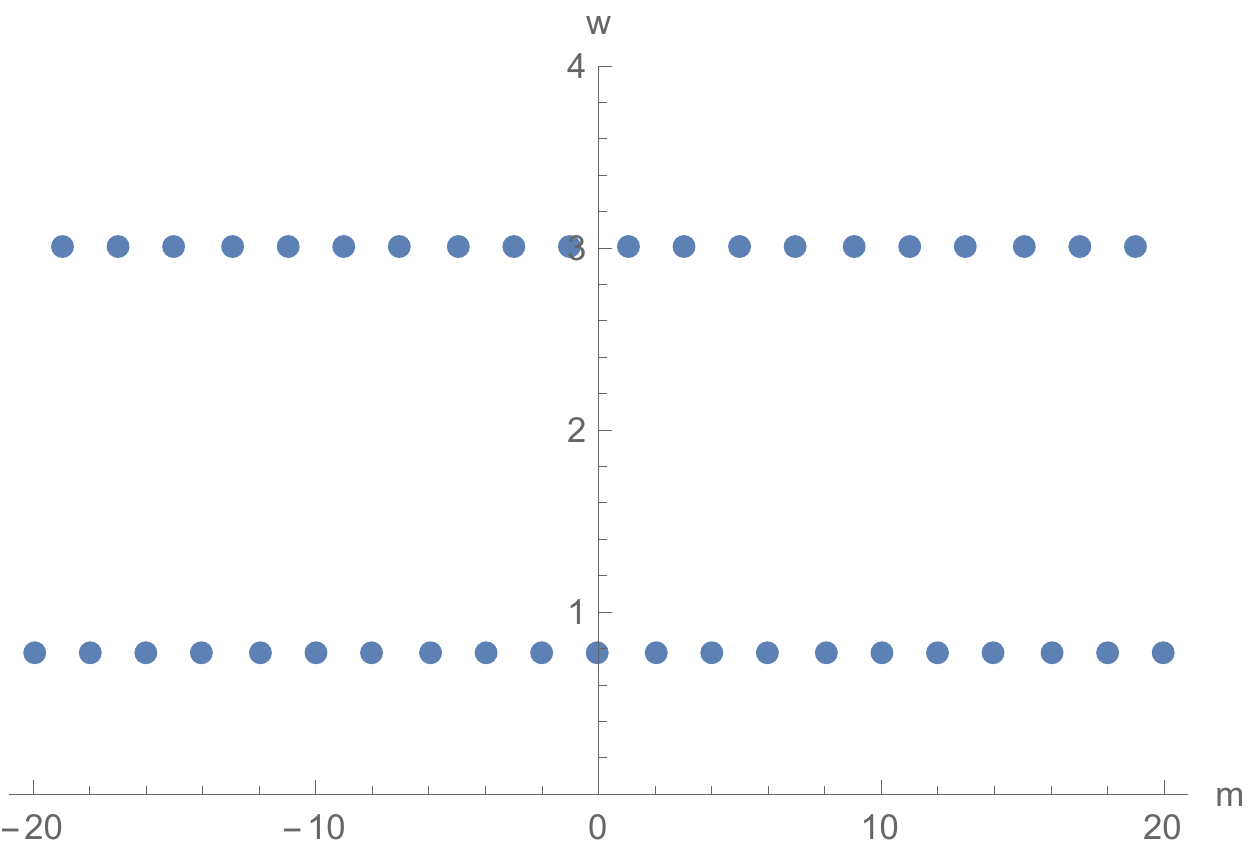}}}
\end{picture}
\end{center}
\vskip 20pt
\begin{center}
\begin{minipage}{15cm}{\footnotesize
\qquad\quad(a)\qquad\qquad\qquad\qquad\qquad\qquad\qquad\qquad\quad (b) \qquad\quad\quad\qquad\qquad\qquad\qquad\qquad\quad\quad (c)\\
{FIGURE 4.} Oscillatory solution $w$ given by \eqref{ABS-1SS} with $p=0.1, q=0.2, k_1=0.5$ and $\rho_1^0c_1=1$:
(a) Shape and motion; (b) Oscillatory solution at $m=2$; (c) Oscillatory solution at $n=2$.}
\end{minipage}
\end{center}
%====================Fig 4===============================

%=====================Fig 5==============================
\vskip20pt
\begin{center}
\begin{picture}(120,100)
\put(-180,-23){\resizebox{!}{3.5cm}{\includegraphics{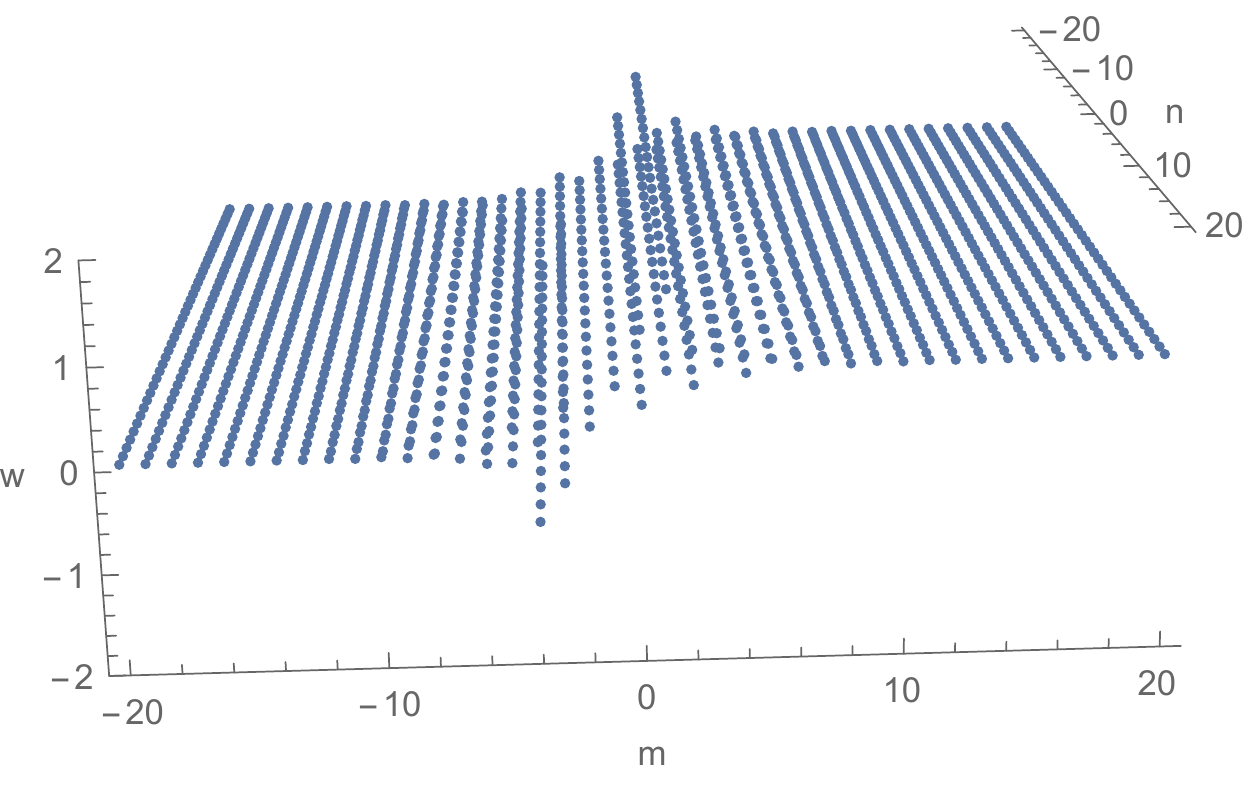}}}
\put(-20,-23){\resizebox{!}{3.5cm}{\includegraphics{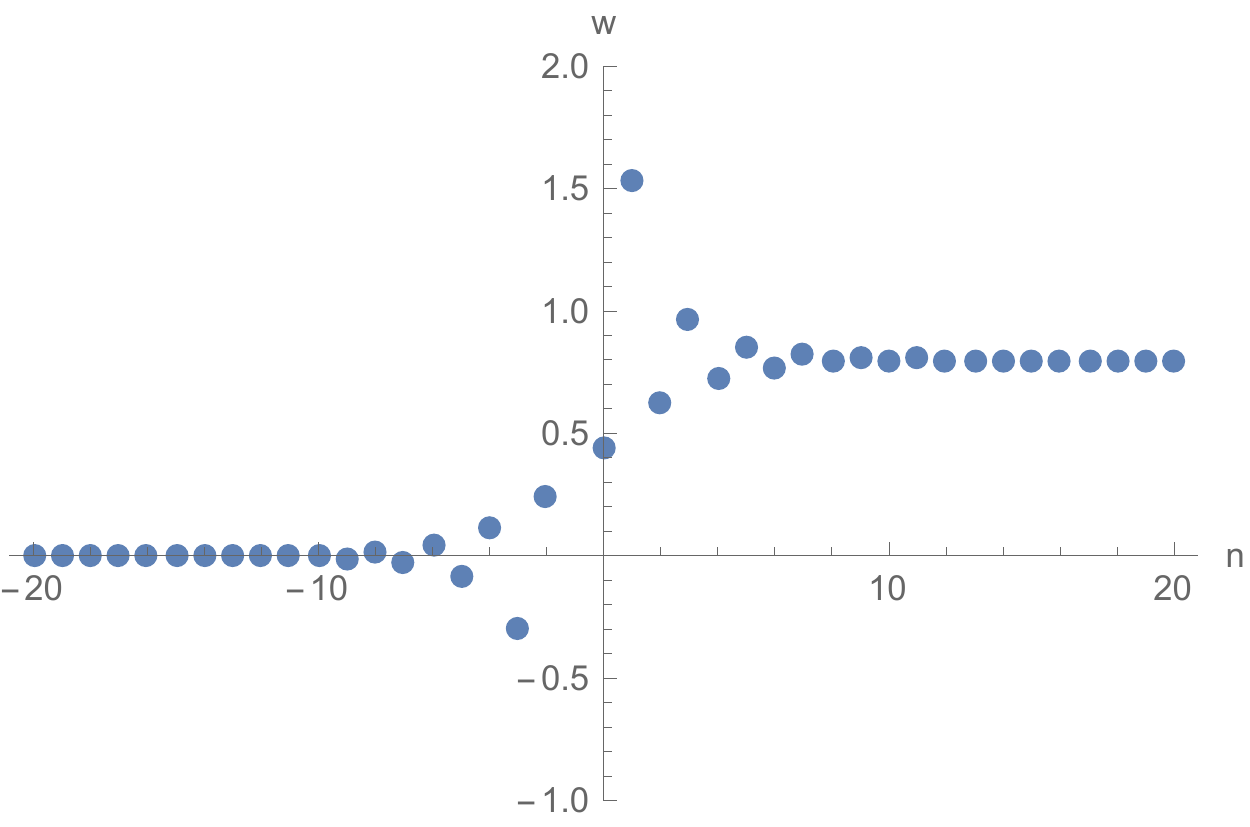}}}
\put(150,-23){\resizebox{!}{3.5cm}{\includegraphics{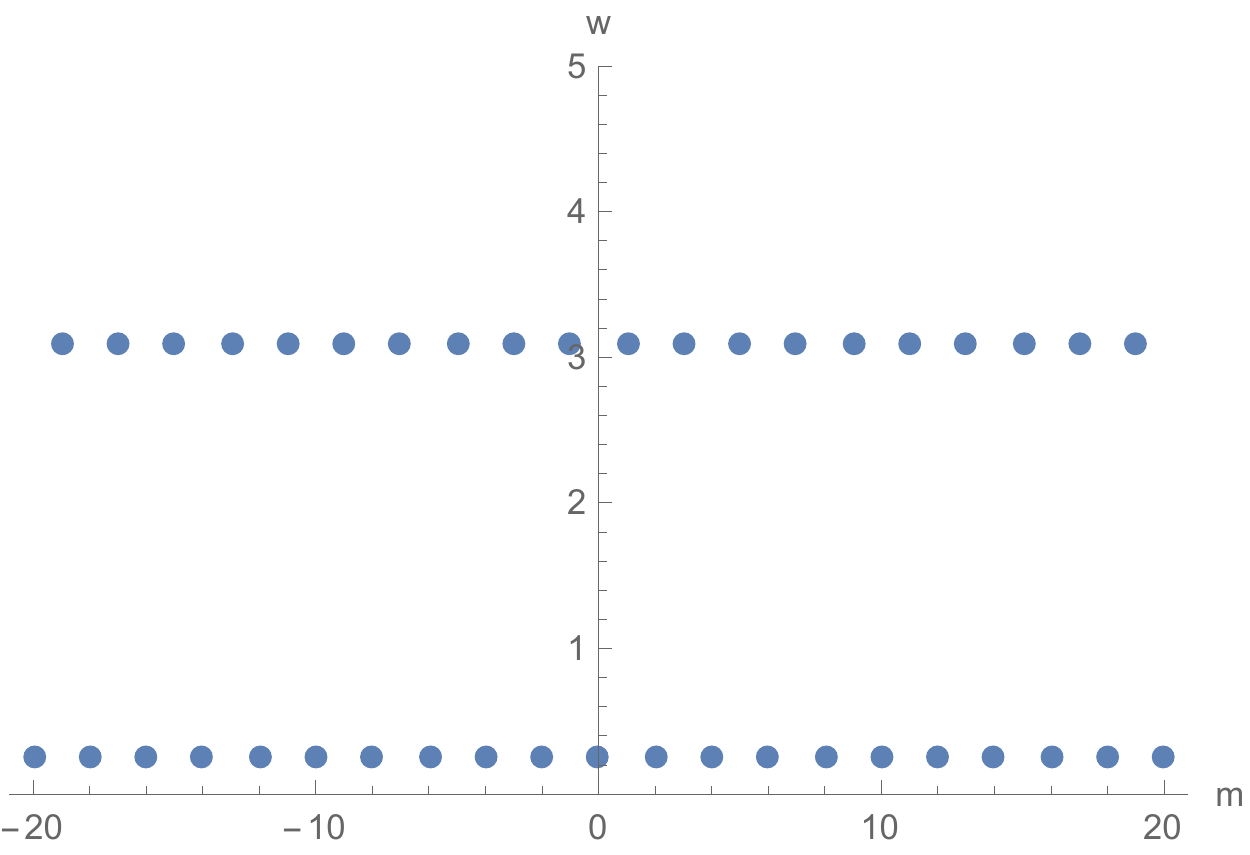}}}
\end{picture}
\end{center}
\vskip 20pt
\begin{center}
\begin{minipage}{15cm}{\footnotesize
\qquad\quad(a)\qquad\qquad\qquad\qquad\qquad\qquad\qquad\qquad\quad (b) \qquad\quad\quad\qquad\qquad\qquad\qquad\qquad\quad\quad (c)\\
{FIGURE 5.} Semi-oscillatory solution $w$ given by \eqref{ABS-1SS} with $p=0.1, q=0.2, k_1=0.15$ and $\rho_1^0c_1=1$:
(a) Shape and motion; (b) Semi-oscillatory solution at $m=2$; (c) Semi-oscillatory solution at $n=-2$.}
\end{minipage}
\end{center}
%====================Fig 5===============================

\section{Lattice BSQ-type equations and solutions}

In this section, we deal with the Cauchy matrix approach for solving the lattice equations of the BSQ-type. Analogous to the
ABS case, here we start from the ``fake'' nonautonomous plane wave factors \eqref{BSQ-pwf}. When $\mathcal {F}_n=\mathcal {G}_m=1$,
the plane wave factors \eqref{BSQ-pwf} become
\begin{align}
& \varrho_{n,m}=\bigg(\frac{p+k}{p+\omega k}\bigg)^n
\bigg(\frac{q+k}{q+\omega k}\bigg)^m\varrho^{0}, \quad \sigma_{n,m}=\bigg(\frac{p+k}{p+\omega^2 k}\bigg)^n
\bigg(\frac{q+k}{q+\omega^2 k}\bigg)^m\sigma^{0},
\end{align}
firstly proposed in the direct linearizing transform for the lattice Gel'fand-Dikii hierarchy
\cite{GD}.  When $\mathcal {F}_n=\omega^n$ or $\mathcal {G}_m=\omega^m$, since $\omega^3=1$, they are
\begin{subequations}
\begin{align}
& \varrho_{n,m}=\frac{p+\omega^{1-n}k}{p+\omega k}
\bigg(\frac{q+k}{q+\omega k}\bigg)^m\varrho^{0}, \quad \sigma_{n,m}=\frac{p+k}{p+\omega^{-n}k}
\bigg(\frac{q+k}{q+\omega^2 k}\bigg)^m\sigma^{0}, \quad \text{if}~\mathcal {F}_n=\omega^n,~\mathcal {G}_m=1, \\
& \varrho_{n,m}=\bigg(\frac{p+k}{p+\omega k}\bigg)^n\frac{q+\omega^{1-m}k}{q+\omega k}\varrho^{0},
\quad \sigma_{n,m}=\bigg(\frac{p+k}{p+\omega^2 k}\bigg)^n
\frac{q+k}{q+\omega^{-m}k}\sigma^{0},  \quad \text{if}~\mathcal {F}_n=1,~\mathcal {G}_m=\omega^m, \
\end{align}
\begin{align}
& \varrho_{n,m}=\frac{p+\omega^{1-n}k}{p+\omega k}\frac{q+\omega^{1-m}k}{q+\omega k}\varrho^{0},
\quad \sigma_{n,m}=\frac{p+k}{p+\omega^{-n}k}
\frac{q+k}{q+\omega^{-m}k}\sigma^{0}, \quad \text{if}~\mathcal {F}_n=\omega^n,~\mathcal {G}_m=\omega^m.
\end{align}
\end{subequations}

Results collected in this section are based on Ref. \cite{FZZ-JNMP}.

\subsection{DES and master functions}

Let us consider the following DES
\begin{subequations}
\label{BSQ-DES}
\begin{align}
\label{BSQ-SE}
& \bL\bM+\bM\bL'=\br\ts, \\
\label{eq:ce-2a}
& \wt{\br}=(\mathcal {F}_np\bI+\bL)\br, \quad \wh{\br}=(\mathcal {G}_mq\bI+\bL)\br, \\
\label{eq:ce-2b}
& \wt{\ts}=\ts(\mathcal {F}_np\bI-\bL')^{-1}, \quad \wh{\ts}=\ts(\mathcal {G}_mq\bI-\bL')^{-1},
\end{align}
\end{subequations}
where functions $\mathcal {F}_n,~\mathcal {G}_m$ satisfy
$\mathcal {F}_n^3=\mathcal {G}_m^3=1$; $p,~q\in \mathbb{C}$ are lattice parameters;
$\bL=\text{Diag}(\bL_1,\ \bL_2)$, $\bL'=\text{Diag}(-\omega\bL_1,\ -\omega^2\bL_2)$ are known constant matrices
with $\omega^2+\omega+1=0$ while $\bM,\ \br$ and $\ts$ depend on $(n,m)$. We suppose that
$\bL_i\in \mathbb{C}_{N_i\times N_i},~(i=1,2)$ satisfy $\mathcal{E}(\bL_1)\bigcap \mathcal{E}(\omega\bL_1)=\varnothing$,
$\mathcal{E}(\bL_2)\bigcap \mathcal{E}(\omega^2\bL_2)=\varnothing$ and $\mathcal{E}(\bL_1)\bigcap \mathcal{E}(\omega^2\bL_2)=\varnothing$,
where $N_1+N_2=N$.

To proceed, we introduce the following similarity invariant quantities involving the matrix $\bM$:
\begin{align}
\label{BSQ-Wijab}
W^{(i,j)}(a,b)=\ts (b\bI+\bL')^j(\bI+\bM)^{-1}(a\bI+\bL)^i \br ,
\end{align}
for $i,~j\in \mathbb{Z}$ and $a,~b\in \mathbb{C}$. %Here $a$ and $b$ are fixed constants chosen such that
%$\mbox{Det}(a\bI+\bL) \neq 0$ and $\mbox{Det}(b\bI+\bL') \neq 0$.
For the shift relations of the master function $W^{(i,j)}(a,b)$, we have the following result.
\begin{Proposition}
\label{ABS-Prop-Sij-sf}
For the master function $W^{(i,j)}(a,b)$ defined by \eqref{BSQ-Wijab} with $\bM, \bL, \bL', \br,\ts$ satisfying DES \eqref{BSQ-DES},
the following relations
\begin{subequations}
\label{BSQ-Wij-dyna}
\begin{align}
\label{BSQ-Wij-dyna-1}
&  (\mathcal {F}_np+b)\wt{W}^{(i,j)}(a,b)-\wt{W}^{(i,j+1)}(a,b)=(\mathcal {F}_np-a)W^{(i,j)}(a,b)+W^{(i+1,j)}(a,b)\nn \\
& \qquad\qquad\qquad\qquad\qquad\qquad\qquad\qquad\quad -\wt{W}^{(i,0)}(a,b)W^{(0,j)}(a,b),\\
\label{BSQ-Wij-dyna-2}
&  \bigg[\prod_{h=1}^{2}\left(\mathcal {F}_np+\omega^h(b-E_1)\right)W^{(i,j)}(a,b)\bigg]=
\bigg[\prod_{h=1}^{2}\left(\mathcal {F}_np+\omega^h(E_2-a)\right)\wt{W}^{(i,j)}(a,b)\bigg] \nn \\
& ~-\sum_{l=1}^{2}\omega^l\bigg[\prod_{h=2}^l\left(\mathcal {F}_np+\omega^{h-1}(b-E_1)\right)W^{(i,0)}(a,b)\bigg]
\cdot \bigg[\prod_{h=l+1}^{2}\left(\mathcal {F}_np+\omega^h(E_2-a)\right)\wt{W}^{(0,j)}(a,b)\bigg],
\end{align}
\end{subequations}
 involving the shift $\wt{\phantom{a}}$ hold.  Operators $E_1,~E_2$ in \eqref{BSQ-Wij-dyna} are defined by their actions on the indices $i$ and $j$: $E_1
W^{(i,j)}(a,b)=W^{(i,j+1)}(a,b)$ and $E_2 W^{(i,j)}(a,b)=W^{(i+1,j)}(a,b)$.
These equations also hold when $(p,\wt{\phantom{a}},\mathcal {F}_n)$ is replaced by $(q,\wh{\phantom{a}},\mathcal {G}_m)$.
\end{Proposition}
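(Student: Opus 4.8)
The plan is to mimic the template used above for the ABS shift relations: first read off how $\bM$ evolves under a single shift from the DES \eqref{BSQ-DES}, then transport that shift through the outer factors of the master function. Abbreviate $\bA:=\mathcal{F}_np\bI+\bL$ and $\bB:=\mathcal{F}_np\bI-\bL'$, so that \eqref{eq:ce-2a}--\eqref{eq:ce-2b} become $\wt\br=\bA\br$ and $\wt\ts=\ts\bB^{-1}$. Applying the shift to \eqref{BSQ-SE} gives $\bL\wt\bM+\wt\bM\bL'=\wt\br\,\wt\ts=\bA(\bL\bM+\bM\bL')\bB^{-1}$; because $\bL=\mathrm{Diag}(\bL_1,\bL_2)$ and $\bL'=\mathrm{Diag}(-\omega\bL_1,-\omega^2\bL_2)$ commute, $\bA\bM\bB^{-1}$ solves this shifted Sylvester equation, and the spectral hypotheses on $\bL_1,\bL_2$ (invertibility of the Sylvester operator) force the unique solution
\[
\wt\bM=\bA\bM\bB^{-1},\qquad\text{equivalently}\qquad \wt\bM-\bM=\br\,\ts\,\bB^{-1},
\]
the second form following from \eqref{BSQ-SE}. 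This plays the role of \eqref{ABS-M-sf-2} and is the only dynamical input both relations need.

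For \eqref{BSQ-Wij-dyna-1} I would set $\bu^{(i)}(a):=(\bI+\bM)^{-1}(a\bI+\bL)^i\br$, so $W^{(i,j)}(a,b)=\ts(b\bI+\bL')^j\bu^{(i)}(a)$. Substituting $\wt\bM=\bA\bM\bB^{-1}$ into $\wt W^{(i,j)}(a,b)$ and using the identities $(\bI+\wt\bM)^{-1}=\bB(\bB+\bA\bM)^{-1}$ and $\bB+\bA\bM=(\bI+\bM)\bB+\br\ts$ recasts it as $\wt W^{(i,j)}(a,b)=\ts(b\bI+\bL')^j\bU^{(i)}(a)$ with $\bU^{(i)}(a):=(\bB+\bA\bM)^{-1}(a\bI+\bL)^i\bA\br$. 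Feeding $\bB+\bA\bM=(\bI+\bM)\bB+\br\ts$ back in and writing $(a\bI+\bL)^i\bA=(\mathcal{F}_np-a)(a\bI+\bL)^i+(a\bI+\bL)^{i+1}$ yields the vector identity
\[
\bB\,\bU^{(i)}(a)=(\mathcal{F}_np-a)\bu^{(i)}(a)+\bu^{(i+1)}(a)-\bu^{(0)}(a)\,\wt W^{(i,0)}(a,b).
\]
Left-multiplying by $\ts(b\bI+\bL')^j$ and using $\ts(b\bI+\bL')^j\bB=(\mathcal{F}_np+b)\ts(b\bI+\bL')^j-\ts(b\bI+\bL')^{j+1}$ turns the left-hand side into $(\mathcal{F}_np+b)\wt W^{(i,j)}(a,b)-\wt W^{(i,j+1)}(a,b)$ and the right-hand side into precisely \eqref{BSQ-Wij-dyna-1}.

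The cubic identity \eqref{BSQ-Wij-dyna-2} is where the three-component structure is felt, and it is the step I expect to be the main obstacle. The decisive algebraic fact is that the block form of $\bL'$ together with $\mathcal{F}_n^3=1$ gives $(\bL')^3=-\bL^3$, so that the two products over the nontrivial cube roots of unity collapse onto one and the same central matrix $\bP:=p^3\bI+\bL^3$:
\[
\prod_{h=1}^{2}(\mathcal{F}_np\bI+\omega^h\bL)=\bA^{-1}\bP,\qquad \prod_{h=1}^{2}(\mathcal{F}_np\bI-\omega^h\bL')=\bB^{-1}\bP.
\]
Reading the operator $E_2-a$ as insertion of $\bL$ next to $\br$ and $b-E_1$ as insertion of $-\bL'$ next to $\ts$, the left side of \eqref{BSQ-Wij-dyna-2} equals $\ts\bB^{-1}(b\bI+\bL')^j\,\bP\,(\bI+\bM)^{-1}(a\bI+\bL)^i\br$, while the first term on its right equals $\ts\bB^{-1}(b\bI+\bL')^j(\bI+\wt\bM)^{-1}\bP\,(a\bI+\bL)^i\br$ (after $\bA^{-1}\bP\,\wt\br=\bP\br$). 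Their difference is thus governed by moving $\bP$ across $(\bI+\bM)^{-1}$, which by $\wt\bM-\bM=\br\ts\bB^{-1}$ and the Sylvester-driven commutator
\[
[\bL^3,\bM]=\bL^2\br\ts-\bL\br\ts\bL'+\br\ts(\bL')^2
\]
--- in which the $\bM\bL^3$ terms cancel exactly because $(\bL')^3=-\bL^3$ --- reduces to rank-one blocks built from $\br\ts$.

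The hard part is the final bookkeeping: each rank-one block, sandwiched between the outer factors, splits into a product of two scalar master functions, and I must show these assemble \emph{exactly} into the nested $\omega$-weighted sum $-\sum_{l=1}^{2}\omega^l[\cdots]$ of \eqref{BSQ-Wij-dyna-2}. The subtlety is that the naive commutator split throws up non-standard factors such as $\bP\bB^{-1}$ that are not individually master-function values, so the three blocks $\bL^2\br\ts,\ \bL\br\ts\bL',\ \br\ts(\bL')^2$ must be recombined so that the surviving powers of $\bL$ reappear as the $(E_2-a)$-string acting on $\wt W^{(0,j)}(a,b)$ and the powers of $\bL'$ as the $(b-E_1)$-string acting on $W^{(i,0)}(a,b)$, the cross-terms furnishing the weights $\omega,\omega^2$ and the partial products $\prod_{h=2}^{l}$ and $\prod_{h=l+1}^{2}$. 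As an alternative I would try to generate the same identity by peeling the two factors off one at a time, invoking a one-step relation of the type \eqref{BSQ-Wij-dyna-1} at each stage so that the nested products arise from the telescoping itself. The $(q,\wh{\phantom{a}},\mathcal{G}_m)$ versions then require no further argument: they follow verbatim from \eqref{eq:ce-2a}--\eqref{eq:ce-2b} under the stated replacements.
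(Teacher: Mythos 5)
Your treatment of the first relation \eqref{BSQ-Wij-dyna-1} is correct and is essentially the paper's own argument: the paper likewise extracts $\wt{\bM}(\mathcal{F}_np\bI-\bL')=(\mathcal{F}_np\bI+\bL)\bM$, hence $\wt{\bM}-\bM=\br\,\wt{\ts}$ (which is exactly your $\br\,\ts\,\bB^{-1}$), pushes this through the auxiliary vector $\bv^{(i)}(a)=(\bI+\bM)^{-1}(a\bI+\bL)^i\br$, and then left-multiplies by $\ts(b\bI+\bL')^j$. Your detour through $\wt{\bM}=\bA\bM\bB^{-1}$ and uniqueness of the solution of the shifted Sylvester equation is a legitimate substitute for the paper's direct subtraction, and the identities $(\bI+\wt{\bM})^{-1}=\bB(\bB+\bA\bM)^{-1}$ and $\bB+\bA\bM=(\bI+\bM)\bB+\br\ts$ that you use all check out.

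The second relation is where the proof is incomplete. You correctly isolate the structural facts that make the cubic identity possible --- $(\bL')^3=-\bL^3$, $\mathcal{F}_n^3=1$, the collapse of $\prod_{h=1}^{2}(\mathcal{F}_np\bI+\omega^h\bL)$ and $\prod_{h=1}^{2}(\mathcal{F}_np\bI-\omega^h\bL')$ onto $\bA^{-1}\bP$ and $\bB^{-1}\bP$ with $\bP=p^3\bI+\bL^3$, and the commutator $[\bL^3,\bM]=\bL^2\br\ts-\bL\br\ts\bL'+\br\ts(\bL')^2$ --- but you then state explicitly that you have not shown that the resulting rank-one blocks reassemble into the $\omega$-weighted nested sum $-\sum_{l=1}^{2}\omega^l[\cdots]\cdot[\cdots]$, and you yourself flag the appearance of factors such as $\bP\bB^{-1}$ that are not individually values of $W^{(i,j)}(a,b)$. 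That reassembly \emph{is} the content of \eqref{BSQ-Wij-dyna-2}; without it the identity is asserted, not proved. The paper obtains \eqref{BSQ-Wij-dyna-2} by first establishing the vector identity \eqref{BSQ-vib-pb} and only then applying $\ts(b\bI+\bL')^j$; the nested partial products $\prod_{h=2}^{l}$, $\prod_{h=l+1}^{2}$ and the weights $\omega^l$ in \eqref{BSQ-vib-pb} arise precisely from applying the one-step relation \eqref{BSQ-vib-pa} successively, once for each factor $\mathcal{F}_np\bI-\omega^h\bL'$, so the telescoping route you relegate to a fallback is the one that actually closes the argument. As submitted, your proof of \eqref{BSQ-Wij-dyna-1} is complete, but your proof of \eqref{BSQ-Wij-dyna-2} is a plan with the decisive step missing.
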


\begin{proof}
Let us first pay attention to the Sylvester equation \eqref{BSQ-SE}, i.e.,
\begin{align}
\label{BSQ-SE-1}
\bL\bM+\bM\bL'=\br\ts.
\end{align}
Subtracting $(\mathcal {F}_np\bI+\bL)^{-1}*$\eqref{BSQ-SE-1} from $\wt{\eqref{BSQ-SE-1}}*(\mathcal {F}_np\bI-\bL')^{-1}$, then we have
\begin{align}
\label{BSQ-M-p1}
\wt{\bM}(\mathcal {F}_np\bI-\bL')=(\mathcal {F}_np\bI+\bL)\bM,
\end{align}
which gives rise to
\begin{align}
\label{BSQ-M-p2}
\wt{\bM}-\bM=\br\wt{\ts},
\end{align}
in the light of the $\wt{\phantom{a}}$-shift of the Sylvester equation \eqref{BSQ-SE}. We now introduce an
auxiliary vector function
\begin{align}
\label{BSQ-via}
\bv^{(i)}(a)=(\bI+\bM)^{-1}(a\bI+\bL)^i \br.
\end{align}
Applying \eqref{BSQ-M-p2} on the vector function $\bv^{(i)}(a)$ and through direct calculation, we find
\begin{subequations}
\label{BSQ-vib-p}
\begin{align}
\label{BSQ-vib-pa}
& \wt{\bv}^{(i)}(a)=(\mathcal {F}_np-a)\bv^{(i)}(a)+\bv^{(i+1)}(a)-\wt{W}^{(i,0)}(a,b)\bv^{(0)}(a), \\
\label{BSQ-vib-pb}
& \bigg[\prod_{h=1}^{3}(\omega^h\mathcal {F}_np\bI-\bL')\bv^{(i)}(a)\bigg]=
\bigg[\prod_{h=1}^{2}(\mathcal {F}_np+\omega^h(E_3-a))\wt{\bv}^{(i)}(a)\bigg] \nn \\
&  -\sum_{l=1}^{2}\omega^l\bigg[\prod_{h=2}^l(\mathcal {F}_np+\omega^{h-1}(b-E_1))W^{(i,0)}(a,b)\bigg]\cdot\bigg[\prod_{h=l+1}^{2}
(\mathcal {F}_np+\omega^h(E_3-a))\wt{\bv}^{(0)}(a)\bigg],
\end{align}
\end{subequations}
where operator $E_3$ is defined by $E_3 \bv^{(i)}(a)=\bv^{(i+1)}(a)$. Furthermore, multiplying \eqref{BSQ-vib-p} from the left by the row
vector $\ts(b\bI+\bL')^j$, we arrive at the shift relations for master functions $W^{(i,j)}(a,b)$.
\end{proof}

\subsection{Lattice BSQ-type equations}

With the $\wt{\phantom{a}}$-shift relations \eqref{BSQ-vib-p} as well as their $\wh{\phantom{a}}$-$q$ counterparts in hand,
we can construct several lattice BSQ-type equations. To do this, we introduce the objects:
\begin{subequations}
\label{BSQ-fun-def}
\begin{align}
& v_a := W^{(-1,0)}(a,0)-1, \quad w_b := W^{(0,-1)}(0,b)-1, \\
& s_a := W^{(-1,1)}(a,0)-a, \quad t_b := W^{(1,-1)}(0,b)-b, \\
& r_a := W^{(-1,2)}(a,0)-a^2, \quad z_b :=W^{(2,-1)}(0,b)-b^2,
\end{align}
and
\begin{align}
\label{eq:Wij-def}
W^{(i,j)}=W^{(i,j)}(0,0), \quad \eta=W^{(0,0)}, \quad s_{a,b}=W^{(-1,-1)}(a,b)-(a+b)^{-1},
\end{align}
\end{subequations}
with $b\neq -a$. For the sake of describing the lattice equations, we denote $G(a,b)=a^3+b^3$.
The equations are presented in the following manner. (For the detailed calculation, one can refer to \cite{FZZ-JNMP}.)

\noindent{\it 1. lattice equations with $\eta,~ v_a$ and $s_a$}:
\begin{subequations}
\label{BSQ-eq-xivsa}
\begin{align}
& \wt{s}_a=(\mathcal {F}_np+\eta)\wt{v}_a-(\mathcal {F}_np-a)v_a, \quad \wh{s}_a=(\mathcal {G}_mq+\eta)\wh{v}_a-(\mathcal {G}_mq-a)v_a, \\
& (\mathcal {F}_np+\mathcal {G}_mq-\wh{\wt{\eta}}+s_a/v_a)(\mathcal {F}_np-\mathcal {G}_mq+\wh{\eta}-\wt{\eta})
=(p^{(1)}_a\wt{v}_a-q^{(1)}_a\wh{v}_a)/v_a,
\end{align}
\end{subequations}
where $p^{(1)}_a=G(p,-a)/(\mathcal {F}_np-a)$ and $q^{(1)}_a=G(q,-a)/(\mathcal {G}_mq-a)$.
Under the point transformation
\begin{align}
v_a=x/x_a, \quad s_a=(y-v_ay_a)/x_a, \quad \eta=z-z_0,
\end{align}
with
\begin{subequations}
\begin{align}
& x_a=\prod^{n-1}_{l=0}(\mathcal {F}_lp-a)^{-1}\prod^{m-1}_{h=0}(\mathcal {G}_hq-a)^{-1}c_1,\\
& y_a=x_a(z_0-c_2),\\
& z_0=c_2(n+m)-\sum^{n-1}_{l=0}\mathcal {F}_lp-\sum^{m-1}_{h=0}\mathcal {G}_hq+c_3,
\end{align}
\end{subequations}
we obtain a closed-form lattice equation
\begin{subequations}
\label{BSQ-A2-1}
\begin{align}
& \wt{y}=\wt{x}z-x, \quad \wh{y}=\wh{x}z-x, \\
& y=x(\wh{\wt{z}}-3c_2)+(G(p,-a)\wt{x}-G(q,-a)\wh{x})/(\wh{z}-\wt{z}),
\end{align}
\end{subequations}
where and whereafter $\{c_i\}$ are constants. Equation \eqref{BSQ-A2-1} is the (A-2) equation formulated in \cite{H-BSQ}.

\noindent{\it 2. lattice equations with $\eta,~ w_b$ and $t_b$}:
\begin{subequations}
\label{BSQ-eq-xiwtb}
\begin{align}
& t_b=(\mathcal {F}_np+b)\wt{w}_b-(\mathcal {F}_np-\wt{\eta})w_b, \quad
  t_b=(\mathcal {G}_mq+b)\wh{w}_b-(\mathcal {G}_mq-\wh{\eta})w_b, \\
& (\mathcal {F}_np+\mathcal {G}_mq+\eta-\wh{\wt{t}}_b/\wh{\wt{w}}_b)(\mathcal {F}_np-\mathcal {G}_mq+\wh{\eta}-\wt{\eta})
=(p^{(2)}_b\wh{w}_b-q^{(2)}_b\wt{w}_b)/\wh{\wt{w}}_b,
\end{align}
\end{subequations}
where $p^{(2)}_b=p^{(1)}_a|_{a\rightarrow -b}$ and $q^{(2)}_b=q^{(1)}_a|_{a\rightarrow -b}$.
Under the point transformation
\begin{align}
w_b=x/x_b, \quad t_b=(y-w_by_b)/x_b, \quad \eta=z-z_0,
\end{align}
with
\begin{subequations}
\begin{align}
& x_b=\prod^{n-1}_{l=0}(-\mathcal {F}_lp-b)\prod^{m-1}_{h=0}(-\mathcal {G}_hq-b) c_1,\\
& y_b=x_b(z_0-c_2), \\
& z_0=-c_2(n+m)-\sum^{n-1}_{l=0}\mathcal {F}_l
p-\sum^{m-1}_{h=0}\mathcal {G}_hq+c_3,
\end{align}
\end{subequations}
we obtain another closed-form lattice equation
\begin{subequations}
\label{BSQ-A2-2}
\begin{align}
& y=x\wt{z}-\wt{x}, \quad y=x\wh{z}-\wh{x}, \\
& \wh{\wt{y}}=\wh{\wt{x}}(z-3c_2)+(G(p,b)\wh{x}-G(q,b)\wt{x})/(\wh{z}-\wt{z}),
\end{align}
\end{subequations}
which is the antisymmetric version of \eqref{BSQ-A2-1} with
\begin{align}
p\rightarrow -p, \quad q\rightarrow -q, \quad n\rightarrow -n, \quad m\rightarrow -m, \quad a\rightarrow b.
\end{align}

\noindent{\it 3. lattice equations with $\eta,~W^{(1,0)}$ and $W^{(0,1)}$}:
\begin{subequations}
\label{BSQ-eq-xiW1001}
\begin{align}
& \mathcal {F}_np \wt{\eta}-\wt{W}^{(0,1)}=\mathcal {F}_np\eta+W^{(1,0)}-\eta\wt{\eta}, \quad
\mathcal {G}_mq \wh{\eta}-\wh{W}^{(0,1)}=\mathcal {G}_mq\eta+W^{(1,0)}-\eta\wh{\eta}, \\
& \wh{\wt{W}}^{(1,0)}+W^{(0,1)}=\mathcal {F}_n\mathcal {G}_mpq-(\mathcal {F}_np+\mathcal {G}_mq-\wh{\wt{\eta}})(\mathcal {F}_np+\mathcal {G}_mq+\eta)\nn \\
& \qquad\qquad\qquad\qquad  +G(p,-q)/(\mathcal {F}_np-\mathcal {G}_mq+\wh{\eta}-\wt{\eta}).
\end{align}
\end{subequations}
This equation can be rewritten as the standard lpBSQ equation
\begin{subequations}
\label{lpBSQ}
\begin{align}
& \wt{z}=x\wt{x}-y, \quad \wh{z}=x\wh{x}-y, \\
& z=x\wh{\wt{x}}-\wh{\wt{y}}+G(p,-q)/(\wh{x}-\wt{x}),
\end{align}
\end{subequations}
under the point transformations
\begin{align}
\eta=x+x_0, \quad W^{(1,0)}=y+x_0\eta-y_0, \quad W^{(0,1)}=z+x_0\eta-z_0,
\end{align}
with
\begin{subequations}
\label{lpBSQ-xyz0}
\begin{align}
& x_0=\sum^{n-1}_{l=0}\mathcal {F}_lp+\sum^{m-1}_{h=0}\mathcal {G}_hq+c_1,\\
& y_0=x_0^2/2+\big(\sum^{n-1}_{l=0}\mathcal {F}_l^2p^2+\sum^{m-1}_{h=0}\mathcal {G}_h^2q^2+c_2\big)/2+c_3,\\
& z_0=x_0^2/2-\big(\sum^{n-1}_{l=0}\mathcal {F}_l^2p^2+\sum^{m-1}_{h=0}\mathcal {G}_h^2q^2+c_2\big)/2-c_3.
\end{align}
\end{subequations}
\noindent{\bf Remark 4.} \textit{In terms of the choices of $\mathcal {F}_l$ and
$\mathcal {G}_h$, one gets various seed solutions for the lpBSQ equation \eqref{lpBSQ}. We list these solutions as follows. \\
When $\mathcal {F}_l=\mathcal {G}_h=1$, the seed solution is
\begin{subequations}
\label{BSQ-lss}
\begin{align}
& x_0=np+mq+c_1, \\
& y_0=x_0^2/2+(np^2+mq^2+c_2)/2+c_3, \\
& z_0=x_0^2/2-(np^2+mq^2+c_2)/2-c_3.
\end{align}
\end{subequations}
When $\mathcal {F}_l=1$ and $\mathcal {G}_h=\omega^h$, the seed solution reads
\begin{subequations}
\label{BSQ-oss-1}
\begin{align}
& x_0=np-\omega^mq/(1-\omega)+c_1, \\
& y_0=x_0^2/2+\big(n p^2-\omega^{2m}q^2/(1-\omega^2)+c_2\big)/2+c_3, \\
& z_0=x_0^2/2-\big(n p^2-\omega^{2m}q^2/(1-\omega^2)+c_2\big)/2-c_3.
\end{align}
\end{subequations}
Similarly, we can have seed solutions as $\mathcal {F}_l=\omega^l$ and $\mathcal {G}_h=1$.
When $\mathcal {F}_l=\omega^l$ and $\mathcal {G}_h=\omega^h$, we get
\begin{subequations}
\label{BSQ-oss-2}
\begin{align}
& x_0=-\omega^np/(1-\omega)-\omega^mq/(1-\omega)+c_1, \\
& y_0=x_0^2/2+\big(-\omega^{2n}p^2/(1-\omega^2)-\omega^{2m}q^2/(1-\omega^2)+c_2\big)/2+c_3, \\
& z_0=x_0^2/2-\big(-\omega^{2n}p^2/(1-\omega^2)-\omega^{2m}q^2/(1-\omega^2)+c_2\big)/2-c_3.
\end{align}
\end{subequations}
Solution \eqref{BSQ-lss} has been reported in \cite{HZ-BSQ}, while \eqref{BSQ-oss-1} and \eqref{BSQ-oss-2} are new, which can be understood as
semi-oscillatory and oscillatory seed solutions for the lpBSQ equation \eqref{lpBSQ}, respectively.}

\noindent{\it 4. lattice equations with $v_a,~w_b$ and $s_{a,b}$}: The first system reads
\begin{subequations}
\label{BSQ-eq-vawbsab-1}
\begin{align}
& (\mathcal {F}_np-a)s_{a,b}-(\mathcal {F}_np+b)\wt{s}_{a,b}=\wt{v}_a w_b, \quad
(\mathcal {G}_mq-a)s_{a,b}-(\mathcal {G}_mq+b)\wh{s}_{a,b}=\wh{v}_a w_b, \\
&  v_{a}\wh{\wt{w}}_{b}=w_b\frac{\frac{p^{(1)}_a}{\mathcal {F}_np+b}\wt{v}_a\wh{w}_b-\frac{q^{(1)}_a}
{\mathcal {G}_mq+b}\wh{v}_a\wt{w}_b}
{(\mathcal {F}_np+b)\wt{w}_b-(\mathcal {G}_mq+b)\wh{w}_b}-\frac{G(a,b)}{(\mathcal {F}_np+b)(\mathcal {G}_mq+b)}s_{a,b},
\end{align}
\end{subequations}
and the second one is
\begin{subequations}
\label{BSQ-eq-vawbsab-2}
\begin{align}
& (\mathcal {F}_np-a)s_{a,b}-(\mathcal {F}_np+b)\wt{s}_{a,b}=\wt{v}_a w_b, \quad
(\mathcal {G}_mq-a)s_{a,b}-(\mathcal {G}_mq+b)\wh{s}_{a,b}=\wh{v}_a w_b, \\
& v_{a}\wh{\wt{w}}_{b}=w_b\frac{\frac{p^{(2)}_b}{\mathcal {F}_np-a}\wt{v}_a\wh{w}_b-\frac{q^{(2)}_b}
{\mathcal {G}_mq-a}\wh{v}_a\wt{w}_b}
{(\mathcal {F}_np+b)\wt{w}_b-(\mathcal {G}_mq+b)\wh{w}_b}-\frac{G(a,b)}{(\mathcal {F}_np-a)(\mathcal {G}_mq-a)}\wh{\wt{s}}_{a,b}.
\end{align}
\end{subequations}
Starting from equation \eqref{BSQ-eq-vawbsab-1}
or \eqref{BSQ-eq-vawbsab-2}, by the same point transformation
\begin{subequations}
\label{BSQ-svw}
\begin{align}
& s_{a,b}=\prod^{n-1}_{l=0}\left(\frac{\mathcal {F}_lp-a}{\mathcal {F}_lp+b}\right)
\prod^{m-1}_{h=0}\left(\frac{\mathcal {G}_hq-a}{\mathcal {G}_hq+b}\right)x,\\
& v_a=\prod^{n-1}_{l=0}(\mathcal {F}_lp-a)\prod^{m-1}_{h=0}(\mathcal {G}_hq-a)y,\\
& w_b=\prod^{n-1}_{l=0}(\mathcal {F}_lp+b)^{-1}\prod^{m-1}_{h=0}(\mathcal {G}_hq+b)^{-1}z,
\end{align}
\end{subequations}
we arrive at
\begin{subequations}
\label{BSQ-mxyz-1}
\begin{align}
& \wt{y}z=x-\wt{x}, \quad \wh{y}z=x-\wh{x}, \\
& y\wh{\wt{z}}=z(G(p,-a)\wt{y}\wh{z}-G(q,-a)\wh{y}\wt{z})/(\wt{z}-\wh{z})-G(a,b)x,
\end{align}
\end{subequations}
and
\begin{subequations}
\label{BSQ-mxyz-2}
\begin{align}
& \wt{y}z=x-\wt{x}, \quad \wh{y}z=x-\wh{x},\\
& y\wh{\wt{z}}=z(G(p,b)\wt{y}\wh{z}-G(q,b)\wh{y}\wt{z})/(\wt{z}-\wh{z})-G(a,b)\wh{\wt{x}},
\end{align}
\end{subequations}
respectively. Equation \eqref{BSQ-mxyz-1} is the (C-3) equation given in \cite{HZ-BSQ}
and \eqref{BSQ-mxyz-2} is the the reversal symmetry version of \eqref{BSQ-mxyz-1} with
\begin{align}
n\rightarrow -n, \quad m\rightarrow -m, \quad y\rightarrow z, \quad z\rightarrow -y, \quad a\rightarrow -b.
\end{align}

The observation of \eqref{BSQ-mxyz-1} and \eqref{BSQ-mxyz-2} sharing the same solution
\eqref{BSQ-svw} gives rise to
\begin{subequations}
\label{BSQ-mxyz-12}
\begin{align}
& \wt{y}z=x-\wt{x}, \quad \wh{y}z=x-\wh{x}, \\
& y\wh{\wt{z}}=z(P_{a,b}\wt{y}\wh{z}-Q_{a,b}\wh{y}\wt{z})/(\wt{z}-\wh{z})-G_{a,b}(x+\wh{\wt{x}}),
\end{align}
\end{subequations}
where
\begin{align}
P_{a,b}=(G(p,-a)+G(p,b))/2, \quad Q_{a,b}=(G(q,-a)+G(q,b))/2, \quad \Omega_{a,b}=G(a,b)/2.
\end{align}
By the transformation
\begin{align}
x=(x_1-\Omega_{a,b})/(2\Omega_{a,b}(x_1+\Omega_{a,b})), \quad y=y_1/(x_1+\Omega_{a,b}), \quad z=z_1/(x_1+\Omega_{a,b}),
\end{align}
from \eqref{BSQ-mxyz-12} we arrive at the (C-4) equation
\begin{subequations}
\begin{align}
& \wt{y}_1z_1=x_1-\wt{x}_1, \quad \wh{y}_1z_1=x_1-\wh{x}_1,\\
& y_1\wh{\wt{z}}_1=z_1(P_{a,b}\wh{z}_1\wt{y}_1-Q_{a,b}\wt{z}_1\wh{y}_1)/(\wt{z}_1-\wh{z}_1)-x_1\wh{\wt{x}}_1+\Omega_{a,b}^2.
\end{align}
\end{subequations}

\subsection{Exact solutions}

Analogous to the ABS case, since master function $W^{(i,j)}(a,b)$ is similarity invariant, the construction of solutions for the lattice BSQ-type equation can be achieved by solving the canonical DES
\begin{subequations}
\label{BSQ-DES-ca}
\begin{align}
\label{BSQ-SE-ca}
& \La\bM+\bM\La'=\br\ts, \\
& \wt{\br}=(\mathcal {F}_np\bI+\La)\br, \quad \wh{\br}=(\mathcal {G}_mq\bI+\La)\br, \\
& \wt{\ts}=\ts(\mathcal {F}_np\bI-\La')^{-1}, \quad \wh{\ts}= \ts(\mathcal {G}_mq\bI-\La')^{-1},
\end{align}
\end{subequations}
where $\La=\text{Diag}(\La_1,\La_2)$ and $\La'=\text{Diag}(-\omega\La_1,-\omega^2\La_2)$
are Jordan canonical forms of the matrices $\bL$ and $\bL'$. To guarantee the solvability of
\eqref{BSQ-SE-ca}, we suppose $\mathcal{E}(\La_1)\bigcap \mathcal{E}(\omega\La_1)=\varnothing$,
$\mathcal{E}(\La_2)\bigcap \mathcal{E}(\omega^2\La_2)=\varnothing$ and $\mathcal{E}(\La_1)\bigcap \mathcal{E}(\omega^2\La_2)=\varnothing$.

Let $\La_\gamma$ be a general block diagonal matrix
\begin{subequations}
\begin{align}
\label{BSQ-Lai}
& \La_1={\rm Diag}\big(\La^{\tyb{N$_{11}$}}_{\ty{D}}(\{k_{1,\iota}\}_{\iota=1}^{N_{11}}),
~\La^{\tyb{N$_{12}$}}_{\ty{J}}(k_{1,N_{11}+1}),~\cdots,~\La^{\tyb{N$_{1s}$}}_{\ty{J}}(k_{1,N_{11}+(s-1)})\big), \\
& \La_2={\rm Diag}\big(\La^{\tyb{N$_{21}$}}_{\ty{D}}(\{k_{2,\iota}\}_{\iota=1}^{N_{21}}),
~\La^{\tyb{N$_{22}$}}_{\ty{J}}(k_{2,N_{21}+1}),~\cdots,~\La^{\tyb{N$_{2t}$}}_{\ty{J}}(k_{2,N_{21}+(t-1)})\big),
\end{align}
\end{subequations}
where $\sum\limits_{\gamma=1}^{s}N_{1\gamma}=N_1$ and $\sum\limits_{\gamma=1}^{t}N_{2\gamma}=N_2$.
We summarize the most general mixed solution as follows (A set of notations is given in the Appendix C).
\begin{Thm}
For the DES \eqref{BSQ-DES-ca} with generic
\begin{align}
\label{La-gen-T}
\La=\mathrm{Diag}(\La_1,~\La_2), \quad \La'=\mathrm{Diag}(-\omega\La_1,~-\omega^2\La_2),
\end{align}
we have solutions
\begin{subequations}
\label{BSQ-rts-D}\begin{align}
& \br=(\br^{(1)}_1, \br^{(2)}_1, \cdots, \br^{(s)}_1;\br^{(1)}_2, \br^{(2)}_2, \cdots, \br^{(t)}_2)^{\st}, \\
& \ts=(\ts^{(1)}_1, \ts^{(2)}_1, \cdots, \ts^{(s)}_1;\ts^{(1)}_2, \ts^{(2)}_2, \cdots, \ts^{(t)}_2),
\end{align}
\end{subequations}
with
\begin{subequations}
\begin{align}
& \br^{(1)}_i=(r^{(1)}_{i,1}, r^{(1)}_{i,2}, \cdots, r^{(1)}_{i,N_{i1}}), \quad r^{(1)}_{i,\iota}=\tau_{i,\iota},~i=1,2,~\iota=1,2,\ldots,N_{i1}, \\
& \ts^{(1)}_j=(s^{(1)}_{j,1}, s^{(1)}_{j,2}, \cdots, s^{(1)}_{j,N_{j1}}),~~s^{(1)}_{j,\kappa}=\varsigma_{j,\kappa},~~j=1,2,~\kappa=1,2,\ldots,N_{i1}, \\
& \br^{(l)}_1=(r^{(l)}_{1,1}, r^{(l)}_{1,2}, \cdots, r^{(l)}_{1,N_{1l}}),~~
r^{(l)}_{1,\io}=\frac{1}{(\iota-1)!}\partial^{\iota-1}_{k_{1,N_{11}+(l-1)}}\tau_{1,N_{11}+(l-1)},\nn \\
& \qquad \qquad l=2,3,\ldots,s,~\iota=1,2,\ldots,N_{1l}, \\
& \br^{(l)}_2=(r^{(l)}_{2,1}, r^{(l)}_{2,2}, \cdots, r^{(l)}_{2,N_{2l}}),~~
r^{(l)}_{2,\iota}=\frac{1}{(\iota-1)!}\partial^{\iota-1}_{k_{2,N_{21}+(l-1)}}\tau_{2,N_{21}+(l-1)},\nn \\
& \qquad \qquad l=2,3,\ldots,t,~\iota=1,2,\ldots,N_{2l}, \\
& \ts^{(l)}_1=(s^{(l)}_{1,1}, s^{(l)}_{1,2}, \cdots, s^{(l)}_{1,N_{1l}}),~~
s^{(l)}_{1,\iota}=\frac{1}{(N_{1l}-\kappa)!}\partial^{N_{1l}-\kappa}_{k_{1,N_{11}+(l-1)}}\varsigma_{1,N_{11}+(l-1)},\nn \\
&\qquad \qquad l=2,3,\ldots,s,~\kappa=1,2,\ldots,N_{1l}, \\
& \ts^{(l)}_2=(s^{(l)}_{2,1}, s^{(l)}_{2,2}, \cdots, s^{(l)}_{2,N_{2l}}),~~
s^{(l)}_{2,\iota}=\frac{1}{(N_{2l}-\kappa)!}\partial^{N_{2l}-\kappa}_{k_{2,N_{21}+(l-1)}}\varsigma_{2,N_{21}+(l-1)},\nn \\
&\qquad \qquad l=2,3,\ldots,t,~\kappa=1,2,\ldots,N_{2l},
\end{align}
\end{subequations}
and $\bM=\bF\bG\bH$, in which
\begin{subequations}
\begin{align}
&\bF=\mathrm{Diag}\bigl(
\La^{\tyb{N$_{11}$}}_{\ty{D}}(\{r^{(1)}_{1,\io}\}^{N_{11}}_{\io=1}),
\bT^{\tyb{N$_{12}$}}(\{r^{(2)}_{1,\io}\}_{\io=1}^{N_{12}}),\ldots,
\bT^{\tyb{N$_{1s}$}}(\{r^{(s)}_{1,\io}\}_{\io=1}^{N_{1s}}); \nn \\
&\qquad\qquad\quad \La^{\tyb{N$_{21}$}}_{\ty{D}}(\{r^{(1)}_{2,\io}\}^{N_{21}}_{\io=1}),
\bT^{\tyb{N$_{22}$}}(\{r^{(2)}_{2,\io}\}_{\io=1}^{N_{22}}),\ldots,
\bT^{\tyb{N$_{2t}$}}(\{r^{(t)}_{2,\io}\}_{\io=1}^{N_{2t}})
\bigr), \\
&\bH=\mathrm{Diag}\bigl(
\La^{\tyb{N$_{11}$}}_{\ty{D}}(\{s^{(1)}_{1,\kp}\}^{N_{11}}_{\kp=1}),
\bH^{\tyb{N$_{12}$}}(\{s^{(2)}_{1,\kp}\}_{\kp=1}^{N_{12}}),\ldots,
\bH^{\tyb{N$_{1s}$}}(\{s^{(s)}_{1,\kp}\}_{\kp=1}^{N_{1s}}); \nn \\
&\qquad\qquad\quad \La^{\tyb{N$_{21}$}}_{\ty{D}}(\{s^{(1)}_{2,\kp}\}^{N_{21}}_{\kp=1}),
\bH^{\tyb{N$_{22}$}}(\{s^{(2)}_{2,\kp}\}_{\kp=1}^{N_{22}}),\ldots,
\bH^{\tyb{N$_{2t}$}}(\{s^{(t)}_{2,\kp}\}_{\kp=1}^{N_{2t}})
\bigr), \\
& \bG=\left(\begin{array}{cc}
\bG^{(1)} & \bG^{(2)}\\
\bG^{(3)} & \bG^{(4)}\\
\end{array}
\right)_{N\times N},
\end{align}
where
\begin{align}
\begin{array}{ll}
\bG^{(1)}=(\bG^{(1)}_{i,j})_{s\times s},~
\bG^{(2)}=(\bG^{(2)}_{i,j})_{s\times t}, ~
\bG^{(3)}=(\bG^{(3)}_{i,j})_{t\times s}, ~
\bG^{(4)}=(\bG^{(4)}_{i,j})_{t\times t},
\end{array}
\end{align}
\end{subequations}
with
\begin{subequations}
\begin{align}
& \bG^{(1)}_{1,1}=\bG'_{\ty{DD}}(\{k_{1,\io}\}_{\io=1}^{N_{11}};\{\og k_{1,\kp}\}_{\kp=1}^{N_{11}}), \\
& \bG^{(1)}_{i,1}=\bG'_{\ty{JD}}(k_{1,N_{11}+i-1};\{\og k_{1,\kappa}\}_{\kp=1}^{N_{11}}), \quad 2\leq i\leq s-1,\\
& \bG^{(1)}_{1,j}=\bG'_{\ty{DJ}}(\{k_{1,\io}\}_{\io=1}^{N_{11}};\og k_{1,N_{11}+j-1}), \quad 2\leq j\leq s-1,\\
& \bG^{(1)}_{i,j}=\bG'_{\ty{JJ}}(k_{1,N_{11}+i-1};\og k_{1,N_{11}+j-1}), \quad 2\leq i,j\leq s-1,
\end{align}
\end{subequations}
and
\begin{subequations}
\begin{align}
& \bG^{(2)}_{1,1}=\bG'_{\ty{DD}}(\{k_{1,\io}\}_{\io=1}^{N_{11}};\{\og^2k_{2,\kp}\}_{\kp=1}^{N_{21}}), \\
& \bG^{(2)}_{i,1}=\bG'_{\ty{JD}}(k_{1,N_{11}+i-1};\{\og^2k_{2,\kappa}\}_{\kp=1}^{N_{21}}), \quad 2\leq i\leq s-1,\\
& \bG^{(2)}_{1,j}=\bG'_{\ty{DJ}}(\{k_{1,\io}\}_{\io=1}^{N_{11}};\og^2k_{2,N_{21}+j-1}), \quad 2\leq j\leq t-1,\\
& \bG^{(2)}_{i,j}=\bG'_{\ty{JJ}}(k_{1,N_{11}+i-1};\og^2k_{2,N_{21}+j-1}), \quad 2\leq i\leq s-1,~2\leq j\leq t-1,
\end{align}
\end{subequations}
and
\begin{subequations}
\begin{align}
& \bG^{(3)}_{1,1}=\bG'_{\ty{DD}}(\{k_{2,\io}\}_{\io=1}^{N_{21}};\{\og k_{1,\kp}\}_{\kp=1}^{N_{11}}), \\
& \bG^{(3)}_{i,1}=\bG'_{\ty{JD}}(k_{2,N_{21}+i-1};\{\og k_{1,\kappa}\}_{\kp=1}^{N_{11}}), \quad 2\leq i\leq t-1,\\
& \bG^{(3)}_{1,j}=\bG'_{\ty{DJ}}(\{k_{2,\io}\}_{\io=1}^{N_{21}};\og k_{1,N_{11}+j-1}), \quad 2\leq j\leq s-1,\\
& \bG^{(3)}_{i,j}=\bG'_{\ty{JJ}}(k_{2,N_{21}+i-1};\og k_{1,N_{11}+j-1}), \quad 2\leq i\leq t-1,~2\leq j\leq s-1,
\end{align}
\end{subequations}
and
\begin{subequations}
\begin{align}
& \bG^{(4)}_{1,1}=\bG'_{\ty{DD}}(\{k_{2,\io}\}_{\io=1}^{N_{21}};\{\og^2k_{2,\kp}\}_{\kp=1}^{N_{21}}), \\
& \bG^{(4)}_{i,1}=\bG'_{\ty{JD}}(k_{2,N_{21}+i-1};\{\og^2k_{2,\kp}\}_{\kp=1}^{N_{21}}), \quad 2\leq i\leq t-1,\\
& \bG^{(4)}_{1,j}=\bG'_{\ty{DJ}}(\{k_{2,\io}\}_{\io=1}^{N_{21}};\og^2 k_{2,N_{21}+j-1}), \quad 2\leq j\leq t-1,\\
& \bG^{(4)}_{i,j}=\bG'_{\ty{JJ}}(k_{2,N_{21}+i-1};\og^2k_{2,N_{21}+j-1}), \quad 2\leq i,j\leq t-1.
\end{align}
\end{subequations}
\end{Thm}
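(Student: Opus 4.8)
The plan is to verify directly that the explicit $\br$, $\ts$ and $\bM=\bF\bG\bH$ exhibited in the statement satisfy each of the three relations making up the canonical DES \eqref{BSQ-DES-ca}; by the similarity invariance of $W^{(i,j)}(a,b)$ established earlier, solving \eqref{BSQ-DES-ca} is equivalent to solving \eqref{BSQ-DES}. I would split the check into two parts: (i) the first-order evolution laws for $\br$ and $\ts$, and (ii) the Sylvester equation \eqref{BSQ-SE-ca}, in each part treating the diagonal (simple-pole) entries and the Jordan (multiple-pole) entries in parallel.

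For part (i) I first record the scalar data: an entry attached to a single eigenvalue $k$ obeys $\wt{\tau}=(\mathcal{F}_np+k)\tau$ and $\wh{\tau}=(\mathcal{G}_mq+k)\tau$, which integrates to the product form built into $\tau_{i,\io}$, while $\varsigma$ obeys the inverse recurrences governed by the eigenvalues $-\og\La_1$, $-\og^2\La_2$ of $\La'$, matching $(\mathcal{F}_np\bI-\La')^{-1}$. The diagonal blocks then satisfy $\wt{\br}=(\mathcal{F}_np\bI+\La)\br$ entrywise. For a Jordan block $\La^{\tyb{N}}_{\ty{J}}(k)=k\bI+\bN$ with $\bN$ the nilpotent shift, I would use that the confluent vector $r_{\io}=\partial_k^{\,\io-1}\tau/(\io-1)!$ intertwines the lattice shift with the nilpotent action: differentiating $\wt{\tau}=(\mathcal{F}_np+k)\tau$ repeatedly in $k$ and collecting by order via the Leibniz rule reproduces exactly $(\mathcal{F}_np\bI+k\bI+\bN)\br$. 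The identical mechanism, applied to the inverse recurrence, handles $\ts$ with the poles $-\og k$ and $-\og^2 k$.

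For part (ii) I would exploit the factorization $\bM=\bF\bG\bH$. Within each block $\bF$ is diagonal or lower-triangular Toeplitz, hence a polynomial in the corresponding nilpotent $\bN$, so $\bF$ commutes with $\La$ block-by-block; likewise $\bH$ commutes with $\La'$. Since $\bF$ and $\bH$ are block diagonal while carrying all the plane-wave data, left-multiplication by $\bF^{-1}$ and right-multiplication by $\bH^{-1}$ convert $\br$ and $\ts$ into constant seed vectors $\bm{a}$, $\bm{b}$. Moving $\La$ and $\La'$ through $\bF$ and $\bH$ gives $\La\bM+\bM\La'=\bF(\La\bG+\bG\La')\bH$, so that the Sylvester equation reduces to the purely algebraic Cauchy relation $\La\bG+\bG\La'=\bm{a}\,\bm{b}^{\st}$. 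It then remains to check each listed sub-block of $\bG$: for a $\bG'_{\ty{DD}}$ block the $(i,j)$ entry on node sets $\{k_{1,\io}\}$ and $\{\og k_{1,\kp}\}$ (resp.\ $\{\og^2 k_{2,\kp}\}$) satisfies the rank-one identity at once, since the spectral prefactor $k_{1,i}-\og k_{1,j}$ (resp.\ $k_{1,i}-\og^2 k_{2,j}$) cancels the denominator; the $\bG'_{\ty{JD}}$, $\bG'_{\ty{DJ}}$, $\bG'_{\ty{JJ}}$ blocks then follow by differentiating this scalar identity in the relevant pole, the confluences being well defined thanks to $\mathcal{E}(\La_1)\cap\mathcal{E}(\og\La_1)=\varnothing$, $\mathcal{E}(\La_2)\cap\mathcal{E}(\og^2\La_2)=\varnothing$ and $\mathcal{E}(\La_1)\cap\mathcal{E}(\og^2\La_2)=\varnothing$.

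I expect the main obstacle to lie in the bookkeeping for the off-diagonal cross blocks $\bG^{(2)}$ and $\bG^{(3)}$, which couple the $\La_1$-sector to the $\La_2$-sector through the mixed factors $\og k_1$ and $\og^2 k_2$: one must confirm that the specific $\og$-pattern imposed on $\La'=\mathrm{Diag}(-\og\La_1,-\og^2\La_2)$ is exactly what forces the \emph{global} block matrix $\bG$ (not merely its diagonal blocks) to solve the single rank-one equation, and that the confluent derivatives taken in these mixed blocks never hit a pole forbidden by the three spectral hypotheses. Once this is settled, reassembling the blocks yields the full Sylvester equation and completes the verification; the computation parallels that of Theorem \ref{T:3} and \cite{FZZ-JNMP}.
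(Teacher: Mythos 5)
The paper itself gives no proof of this theorem: it is stated as a summary, with the verification deferred to \cite{FZZ-JNMP} and to the analogous ABS-case construction of \cite{ZZ-SAM-2013}, where the Sylvester equation is solved precisely by the factorization $\bM=\bF\bG\bH$. Your strategy — checking the dispersion relations entrywise (with the Leibniz rule handling the confluent Jordan entries) and reducing \eqref{BSQ-SE-ca} to a constant Cauchy-kernel identity by pushing $\La$ and $\La'$ through $\bF$ and $\bH$ — is exactly that standard route, so in substance you are reproducing the intended argument. One correction: your claim that ``$\bH$ commutes with $\La'$'' is false for Jordan blocks of size at least two, since the skew-triangular (Hankel-type) Toeplitz block $\bH^{\tyb{N}}$ does not commute with $a\bI+\bN$ but rather intertwines it with its transpose, $\bH^{\tyb{N}}(a\bI+\bN)=(a\bI+\bN^{\st})\bH^{\tyb{N}}$. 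The reduction therefore lands on $\La\bG+\bG\,\La''=\bm{a}\,\bm{b}^{\st}$ with $\La''$ the blockwise transpose of $\La'$ (same spectrum, so the three eigenvalue-separation hypotheses still guarantee solvability), and it is this transposed equation that the listed $\bG'_{\ty{DJ}}$, $\bG'_{\ty{JD}}$, $\bG'_{\ty{JJ}}$ blocks are built to satisfy; with that adjustment the verification closes as you describe.
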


In this instance, we would utilize the lpBSQ equation denoted by \eqref{lpBSQ} as a case study to present the exact solutions explicitly.
When $N_1=N_2=1$ and $k_{1,1}=k_1$, $k_{2,1}=k_2$, we obtain a solution for the lpBSQ equation \eqref{lpBSQ}
\begin{align}
x=\eta-x_0, \quad y=W^{(1,0)}-x_0\eta+y_0, \quad z= W^{(0,1)}-x_0\eta+z_0,
\end{align}
with \eqref{lpBSQ-xyz0} and
\begin{subequations}
\label{BSQ-1ss}
\begin{align}
& \eta=\frac{H_0(\varrho+\sigma)+A_1\varrho\sigma}{H_0+H_1\varrho+H_2\sigma+H_3\varrho\sigma}, \\
& W^{(1,0)}=\frac{H_0(k_1\varrho+k_2\sigma)+A_2\varrho\sigma}{H_0+H_1\varrho+H_2\sigma+H_3\varrho\sigma}, \\
& W^{(0,1)}=\frac{-\omega H_0(k_1\varrho+\omega k_2\sigma)-\omega A_3\varrho\sigma}{H_0+H_1\varrho+H_2\sigma+H_3\varrho\sigma},
\end{align}
\end{subequations}
where
\begin{subequations}
\begin{align}
& A_1=-(2\omega+1)(k_1^3-k_2^3), \quad  A_2=(k_1+k_2)A_1, \quad A_3=(1-\omega^2)(k_1^3-k_2^3)(\omega^2 k_1+k_2), \\
& H_0=3k_1k_2(2k_1k_2-\omega k_1^2-\omega^2 k_2^2), \quad H_1=(\omega^2-1)k_2(\omega k_1^2-2k_1k_2+\omega^2 k_2^2), \\
& H_2=(\omega-1)k_1(\omega k_1^2-2k_1k_2+\omega^2 k_2^2), \quad H_3=(k_1-k_2)(\omega^2 k_2-\omega k_1), \\
& \varrho=\prod^{n-1}_{l=0}\bigg(\frac{\mathcal {F}_lp+k_1}{\mathcal {F}_lp+\omega k_1}\bigg)
\prod^{m-1}_{h=0}\bigg(\frac{\mathcal {G}_hq+k_1}{\mathcal {G}_hq+\omega k_1}\bigg)\varrho^{0}, \quad
\sigma=\prod^{n-1}_{l=0}\bigg(\frac{\mathcal {F}_lp+k_2}{\mathcal {F}_lp+\omega^2 k_2}\bigg)
\prod^{m-1}_{h=0}\bigg(\frac{\mathcal {G}_hq+k_2}{\mathcal {G}_hq+\omega^2 k_2}\bigg)\sigma^{0}.
\end{align}
\end{subequations}
If $k_1=k_2$, then \eqref{BSQ-1ss} is simplified to
\begin{subequations}
\begin{align}
\label{BSQ-1ss-k12}
\eta=\frac{H_0(\varrho+\sigma_1)}{H_0+H_1\varrho+H_2\sigma_1},
\quad W^{(1,0)}=k_1\eta,
\quad W^{(0,1)}=\frac{-\omega H_0k_1(\varrho+\omega \sigma_1)}{H_0+H_1\varrho+H_2\sigma_1},
\end{align}
with
\begin{align}
H_0=3k_1, \quad H_1=1-\omega^2, \quad H_2=1-\omega, \quad \sigma_1=\sigma|_{k_2\rightarrow k_1}.
\end{align}
\end{subequations}
When $\mathcal {F}_l=\mathcal {G}_h=1$, solution \eqref{BSQ-1ss-k12} is exactly the soliton solution (3.10) given in \cite{FZZ-JNMP}, which is
real if we take $\sigma^{0}=\varrho^{0^*}$.
When $\mathcal {F}_l=\omega^l$ or $\mathcal {G}_h=\omega^h$, \eqref{BSQ-1ss-k12} %together with
%\begin{align}
%\varrho=\frac{p+\omega^{1-n}k_1}{p+\omega k_1}\frac{q+\omega^{1-m}k_1}{q+\omega k_1}\varrho^{0}, \quad
%\sigma=\frac{p+k_1}{p+\omega^{-n} k_1}\frac{q+k_1}{q+\omega^{-m} k_1}\sigma^{0}
%\end{align}
provides (semi-)oscillatory solution for the lpBSQ equation \eqref{lpBSQ}. Since $\sigma$ in this case can not be viewed as
complex conjugate of $\varrho$, (semi-)oscillatory solution \eqref{BSQ-1ss-k12} is complex.

\section{Conclusions}
\label{sec:6}

This paper presents a generalized Cauchy matrix scheme, based on the previous work in \cite{Nijhoff-ABS,ZZ-SAM-2013,FZZ-JNMP}, which can be used to reconstruct solutions for all ABS equations (except for $\mathrm{Q4}$) and some lattice BSQ-type equations. Starting from the DES \eqref{ABS-DES} involving $f_n$ and $g_m$ where $f^2_n=g^2_m=1$, we define master equations
$S^{(i,j)},~S(a,b),~V(a)$, which possess several properties, such as symmetric property, similarity invariance and shift relations.
Benifiting from these shift relations, some lattice KdV-type equations were derived,
including lpKdV \eqref{lpKdV-w}, lpmKdV \eqref{eq:v}, lSKdV \eqref{lSKdV} and NQC \eqref{NQC}
equations. Among these equations, the first three are autonomous since they can
be transformed into autonomous equations through simple point transformations (see also \cite{FZ-OS}).
Although the NQC equation \eqref{NQC} is nonautonomous as $f_n=(-1)^n$ or $g_m=(-1)^m$, it can still be used to construct
solutions for the $\mathrm{Q3}_{\delta}$ equation. Furthermore, solutions
for other equations in the list \eqref{ABS} are derived by using the degeneration
relation depicted in Figure 2. The most general mixed solutions to the Jordan canonical DES \eqref{ABS-DES-JC} are presented,  based on which
exact solutions to the lattice equations in the  ABS list except for $\mathrm{Q4}$ are derived. In terms of
\begin{align*}
(f_n,g_m)=(1,1)~ \text{or} ~ ((-1)^n,(-1)^m)~ \text{or}~(1,(-1)^m) ~ \text{or}~((-1)^n,1),
\end{align*}
we construct soliton, oscillatory and semi-oscillatory solutions, respectively. Different from soliton solution,
the oscillatory solution has periodic property (see solutions \eqref{ABS-1SS}-\eqref{ABS-JBS}).
To obtain the solutions of lattice BSQ-type equations, we adopt a similar strategy, in other words,
we introduce two auxiliary functions $\mathcal {F}_n$ and $\mathcal {G}_m$ with $\mathcal {F}^3_n=\mathcal {G}^3_m=1$
in the DES \eqref{BSQ-DES}. Then we derive some three-component lattice BSQ-type equations as a closed-form from the shift relations
\eqref{BSQ-Wij-dyna} as well as their hat-$q$ counterpart, which correspond to (A-2), (B-2),  (C-3) and (C-4) equations
firstly introduced in \cite{H-BSQ}. Soliton, oscillatory and semi-oscillatory solutions are presented by setting
\begin{align*}
(\mathcal {F}_n,\mathcal {G}_m)=(1,1), ~ (\omega^n,\omega^m)~\text{and}~ (1,\omega^m) ~ \text{or}~(\omega^n,1).
\end{align*}
When $(\mathcal {F}_n,\mathcal {G}_m)=(\omega^n,\omega^m)$, the resulting solution still has periodic property (see \eqref{BSQ-1ss}).
With regard to the solution \eqref{BSQ-1ss-k12}, it appears real in the case of soliton and $\sigma^{0}=\varrho^{0^*}$. While in the case of
oscillatory or semi-oscillatory, this solution is complex regardless of whether $\sigma^{0}=\varrho^{0^*}$ or not.

In conclusion, we would like to highlight some important modifications made to the Cauchy matrix framework \cite{ZZ-SAM-2013, FZZ-JNMP}. First of all, constant lattice parameters $p$ and $q$ are replaced by $\mathbb{F}_np$ and $\mathbb{G}_mq$, where in the lattice ABS case $(\mathbb{F}_n,\mathbb{G}_m)=(f_n,g_m)$
and in the lattice BSQ case $(\mathbb{F}_n,\mathbb{G}_m)=(\mathcal {F}_n,\mathcal {G}_m)$, respectively. Although these changes in principle enable the generation of nonautonomous lattice equations  (cf. \cite{WZZ-CAMC,FZ-ROMP}), we show that the
constraints $f^2_n=g_m^2=1$ and $\mathcal{F}^3_n=\mathcal {G}_m^3=1$ enable one to derive the autonomous lattice ABS/BSQ equations.
%which can be transformed into autonomous lattice ABS/BSQ equations by imposing constraints $f^2_n=g_m^2=1$ and $\mathcal{F}^3_n=\mathcal {G}_m^3=1$.
This modification leads to the emergence of (semi-)oscillatory solutions in the resulting lattice equations. As, in the continuous case, oscillatory factors $((-1)^n,(-1)^m)$ or $(\omega^n,\omega^m)$
break differentiability and do not appear in analytic solutions, there is no continuum limit for the oscillatory solutions. However, the semi-oscillatory plane wave factor allows for the straight continuum limit on the discrete
exponential part. Thus, the semi-discrete KdV-type equations \cite{MZ-TMP} and semi-discrete BSQ-type equations \cite{HZ-DBSQ} are still capable of producing semi-oscillatory solutions.
Furthermore, this scheme can be generalized to the entire lattice Gel'fand-Dikii hierarchy \cite{GD} (also see \cite{TZZ-SIGMA}) by initializing the DES \eqref{BSQ-DES} with $\bL'=\text{Diag}(-\omega\bL_1,\ -\omega^2\bL_2,\ \ldots, \ -\omega^{N-1}\bL_{N-1})$ and $\mathcal {F}^N_n=\mathcal {G}^N_m=1$, where $\omega^N=1$. However, it should be noted that this scheme cannot be applied to the extended lattice BSQ-type equations \cite{ZZF-SAPM} since it is not possible to introduce independent variables $n$ or $m$ into the three solutions $\omega_1(k),~\omega_2(k)$ and $\omega_3(k)=k$ of the third order polynomial equation of symmetric form
\begin{align*}
G_3(\omega,k)=\sum\limits_{j=1}^3\alpha_j(\omega^j-k^j)=0, \quad \alpha_3\equiv 1,
\end{align*}
with coefficients $\{\alpha_j\}$ and parameter $k$.
Finally, we construct soliton solutions of the nonautonomous ABS lattice equations with the help of the bilinear method \cite{SZZ}.  Thus we can set $p_n=f_np$ and $q_m=g_mq$, and regain the oscillatory solutions of $\mathrm{H1}$, $\mathrm{H2}$, $\mathrm{H3}_{\delta}$ and $\mathrm{Q1}_{\delta}$ equations from the perspective of bilinear structure. This operation can be naturally generalized to the BSQ case \cite{HZ-DBSQ}, which will be done in the future.

\vskip 20pt
\section*{Acknowledgments}
%The authors are grateful to the referees for the invaluable comments.
This project is supported by the National Natural Science Foundation of China
(Nos. 12071432, 12001369), the Natural Science Foundation of Zhejiang Province (No. LY17A010024)
and Shanghai Sailing Program (No. 20YF1433000).

\vskip 20pt
\section*{Data Availibility Statement}
Data sharing not applicable to this article as no datasets were generated or
analyzed during the current study.

\vskip 20pt
\section*{Conflict of interest}
There are no conflicts of interest to declare.

{\small
}

\section*{Appendix}
\appendix

\setcounter{equation}{0}
\renewcommand\theequation{A.\arabic{equation}}

\section{Some lattice equations in the ABS list}\label{sec:LoN}

After reparameterizing the lattice parameters in the ABS list, some of the lattice equations can be described as
\begin{subequations}
\label{ABS}
\begin{align}
\label{Q3}
\mathrm{Q3}_{\delta}:~~ & P(u\wh{u}+\wt{u}\wh{\wt{u}})-Q(u\wt{u}+\wh{u}\wh{\wt{u}})
=(p^2-q^2)\bigg(\wt{u}\wh{u}+u\wh{\wt{u}}+\frac{\delta^2}{4PQ}\bigg), \\
\label{Q2}
\mathrm{Q2}:~~& (q^2-a^2)(u-\wh{u})(\wt{u}-\wh{\wt{u}})-(p^2-a^2)(u-\wt{u})(\wh{u}-\wh{\wt{u}}) \nn \\
& +(p^2-a^2)(q^2-a^2)(q^2-p^2)(u+\wt{u}+\wh{u}+\wh{\wt{u}}) \nn \\
& =(p^2-a^2)(q^2-a^2)(q^2-p^2)\big((p^2-a^2)^2+(q^2-a^2)^2-(p^2-a^2)(q^2-a^2)\big), \\
\label{Q1}
\mathrm{Q1}_{\delta}:~~ & (q^2-a^2)(u-\wh{u})(\wt{u}-\wh{\wt{u}})-(p^2-a^2)(u-\wt{u})(\wh{u}-\wh{\wt{u}})
=\frac{\delta^2 a^4(p^2-q^2)}{(p^2-a^2)(q^2-a^2)}, \\
\label{H3}
\mathrm{H3}_{\delta}:~~ & P(a^2-q^2)(u\wt{u}+\wh{u}\wh{\wt{u}})-
Q(a^2-p^2)(u\wh{u}+\wt{u}\wh{\wt{u}})=\delta(p^2-q^2), \\
\label{H2}
\mathrm{H2}:~~ & (u-\wh{\wt{u}})(\wt{u}-\wh{u})+(p^2-q^2)
(u+\wt{u}+\wh{u}+\wh{\wt{u}})=p^4-q^4,\\
\label{H1}
\mathrm{H1}:~~ & (u-\wh{\wt{u}})(\wh{u}-\wt{u})=p^2-q^2,
\end{align}
\end{subequations}
where $\delta$ is a constant and in \eqref{Q3} $(p,P)=\mathfrak{p}$ and $(q,Q)=\mathfrak{q}$ are the points on the elliptic curve
\begin{align}
\label{ell-Q3}
\{(x,X)|X^2=(x^2-a^2)(x^2-b^2)\},
\end{align}
and in \eqref{H3}
\begin{align}
\label{eq:parcurves}
P^2=a^2-p^2, \quad Q^2=a^2-q^2.
\end{align}
The $\mathrm{A1}_\delta$ and $\mathrm{A2}$ equations are omitted here since $\mathrm{A1}_\delta$
and $\mathrm{Q1}_\delta$ are equivalent under a point transformation,
so do $\mathrm{A2}$ and $\mathrm{Q3}_{\delta=0}$.

\section{List of notations for solutions to ABS list}
\label{sec:LoN}

\setcounter{equation}{0}
\renewcommand\theequation{B.\arabic{equation}}

We introduce some notations where the subscripts $_D$ and $_J$ usually correspond to
the cases of $\Ga$ being diagonal and being of Jordan-block, respectively.
\begin{subequations}
\label{ABS-nota}
\begin{align}
& \hbox{plane wave factor:}~~\rho_s=\prod^{n-1}_{i=0}\bigg(\frac{f_ip+k_s}{f_ip-k_s}\bigg)
\prod^{m-1}_{j=0}\bigg(\frac{g_jq+k_s}{g_jq-k_s}\bigg)\rho_s^{0},\\
& N\mathrm{\hbox{-}th~order~vector:}~~\br_{\hbox{\tiny{\it D}}}^{\hbox{\tiny{[{\it N}]}}}(\{k_s\}_{1}^{N})=(\rho_1, \rho_2, \cdots, \rho_N)^{\st},\\
& N\mathrm{\hbox{-}th~order~vector:}~~\br_{\ty{J}}^{\tyb{N}}(k_1)=\Bigl(\rho_1, \frac{\partial_{k_1}\rho_1}{1!},
\cdots, \frac{\partial^{N-1}_{k_1}\rho_1}{(N-1)!}\Bigr)^{\st},\\
& N\times N ~\mathrm{matrix:}~~\Ga^{\tyb{N}}_{\ty{D}}(\{k_s\}^{N}_{1})=\mathrm{Diag}(k_1, k_2, \cdots, k_N),\\
& N\times N ~\mathrm{matrix:}~~\Ga^{\tyb{N}}_{\ty{J}}(a)
=\left(\begin{array}{cccccc}
a & 0    & 0   & \cdots & 0   & 0 \\
1   & a  & 0   & \cdots & 0   & 0 \\
0   & 1  & a   & \cdots & 0   & 0 \\
\vdots &\vdots &\vdots &\vdots &\vdots &\vdots \\
0   & 0    & 0   & \cdots & 1   & a
\end{array}\right), \\
& N\times N ~\mathrm{matrix:}~~\bF^{\tyb{N}}_{\ty{D}}(\{k_s\}^{N}_{1})=\mathrm{Diag}(\rho_1, \rho_2, \cdots, \rho_N),\\
& N\times N ~\mathrm{matrix:}~~\bH^{\tyb{N}}_{\ty{D}}(\{c_s\}^{N}_{1})=\mathrm{Diag}(c_1, c_2, \cdots, c_N),\\
& N\times N ~\mathrm{matrix:}~~\bF^{\tyb{N}}_{\ty{J}}(k_1)
=\left(
\begin{array}{ccccc}
\rho_1 & 0 & 0 & \cdots & 0\\
\frac{\partial_{k_1}\rho_1}{1!} & \rho_1 & 0 & \cdots & 0\\
\frac{\partial^{2}_{k_1}\rho_1}{2!} &\frac{\partial_{k_1}\rho_1}{1!} & \rho_1 & \cdots & 0\\
\vdots &\vdots &\vdots & \ddots & \vdots\\
\frac{\partial^{N-1}_{k_1}\rho_1}{(N-1)!} & \frac{\partial^{N-2}_{k_1}\rho_1 }{(N-2)!} & \frac{\partial^{N-3}_{k_1}\rho_1}{(N-3)!} & \cdots & \rho_1
\end{array}
\right),\\
& N\times N ~\mathrm{matrix:}~~\bH^{\tyb{N}}_{\ty{J}}(\{c_s\}^{N}_{1})
=\left(\begin{array}{ccccc}
c_1 & \cdots  & c_{N-2}  & c_{N-1} & c_N\\
c_2 & \cdots & c_{N-1}  & c_N & 0\\
c_3 &\cdots & c_N & 0 & 0\\
\vdots &\vdots & \vdots & \vdots & \vdots\\
c_N & \cdots & 0 & 0 & 0
\end{array}
\right),
\end{align}
\begin{align}
& N\times N ~\mathrm{matrix:}~~\bG^{\tyb{N}}_{\ty{D}}(\{k_s\}^{N}_{1})
=(g_{i,j})_{N\times N},~~~g_{i,j}=\frac{1}{k_i+k_j},\\
& N_1\times N_2 ~\mathrm{matrix:}~~\bG^{\tyb{N$_1$,N$_2$}}_{\ty{DJ}}(\{k_s\}^{N_1}_{1};a)
=(g_{i,j})_{N_1\times N_2},~~~g_{i,j}=-\Bigl(\frac{-1}{k_i+a}\Bigr)^j,\\
& N_1\times N_2 ~\mathrm{matrix:}~~\bG^{\tyb{N$_1$,N$_2$}}_{\ty{JJ}}(a;b)
=(g_{i,j})_{N_1\times N_2},~~~g_{i,j}=\mathrm{C}^{i-1}_{i+j-2}\frac{(-1)^{i+j}}{(a+b)^{i+j-1}},\\
& N\times N ~\mathrm{matrix:}~~\bG^{\tyb{N}}_{\ty{J}}(a)=\bG^{\tyb{N,N}}_{\ty{JJ}}(a;a)
=(g_{i,j})_{N\times N},~~~g_{i,j}=\mathrm{C}^{i-1}_{i+j-2}\frac{(-1)^{i+j}}{(2a)^{i+j-1}},\label{G-J-a}
\end{align}
\end{subequations}
where
\[\mathrm{C}^{i}_{j}=\frac{j!}{i!(j-i)!},~~(j\geq i).\]

The $N$-th order matrix in the following form
\begin{align}
\mathcal{B}=\left(\begin{array}{cccccc}
a_0 & 0    & 0   & \cdots & 0   & 0 \\
a_1 & a_0  & 0   & \cdots & 0   & 0 \\
a_2 & a_1  & a_0 & \cdots & 0   & 0 \\
\vdots &\vdots &\cdots &\vdots &\vdots &\vdots \\
a_{N-1} & a_{N-2} & a_{N-3}  & \cdots &  a_1   & a_0
\end{array}\right)_{N\times N}
\label{A}
\end{align}
with scalar elements $\{a_j\}$ is  a $N$th-order lower triangular Toeplitz matrix.
All such matrices compose a commutative set $\widetilde{G}^{\tyb{N}}$ with respect to matrix multiplication
and the subset
\[G^{\tyb{N}}=\big \{\mathcal{B} \big |~\big. \mathcal{B}\in \widetilde{G}^{\tyb{N}},~|\mathcal{B}|\neq 0 \big\}\]
is an Abelian group.
Such kind of matrices play useful roles in the expression of exact solution for soliton equations \cite{ZDJ-Wron,ZZSZ}.

\section{List of notations for solutions to lattice BSQ-type equations}
\label{sec:LoN}

\setcounter{equation}{0}
\renewcommand\theequation{C.\arabic{equation}}

\begin{itemize}
\item{$N_{i1} \times N_{i1}$ diagonal matrix:
\begin{align}
\La^{[N_{i1}]}_{\ty{D}}(\{k_{i,j}\}_{j=1}^{N_{i1}})={\rm Diag}(k_{i,1},k_{i,2},\ldots,k_{i,N_{i1}}),
\end{align}
}
\item{$N_{ij} \times N_{ij}$ Jordan-block matrix:
\begin{align}
& \La^{[N_{ij}]}_{\ty{J}}(a)
=\left(\begin{array}{cccccc}
a & 0    & 0   & \cdots & 0   & 0 \\
1   & a  & 0   & \cdots & 0   & 0 \\
0   & 1  & a   & \cdots & 0   & 0 \\
\vdots &\vdots &\vdots &\vdots &\vdots &\vdots \\
0   & 0    & 0   & \cdots & 1   & a
\end{array}\right)_{N_{ij} \times N_{ij}},
\end{align}
}
\item{Lower triangular Toeplitz matrices:
\begin{align}
\bT^{\tyb{N}}(\{a_i\}^{N}_{1})
=\left(\begin{array}{cccccc}
a_1 & 0    & 0   & \cdots & 0   & 0 \\
a_2 & a_1  & 0   & \cdots & 0   & 0 \\
a_3 & a_2  & a_1 & \cdots & 0   & 0 \\
\vdots &\vdots &\cdots &\vdots &\vdots &\vdots \\
a_{N} & a_{N-1} & a_{N-2}  & \cdots &  a_2   & a_1
\end{array}\right)_{N\times N},
\label{T}
\end{align}
}
\item{Skew triangular Toeplitz matrix:
\begin{align}
\bH^{\tyb{N}}(\{b_j\}^{N}_{1})
=\left(\begin{array}{ccccc}
b_1 & \cdots  & b_{N-2}  & b_{N-1} & b_{N}\\
b_2 & \cdots & b_{N-1}  & b_{N} & 0\\
b_3 &\cdots & b_{N} & 0 & 0\\
\vdots &\vdots & \vdots & \vdots & \vdots\\
b_{N} & \cdots & 0 & 0 & 0
\end{array}
\right)_{N\times N}.
\label{H}
\end{align}}
\end{itemize}
Meanwhile, the following expressions need to be considered:
\begin{subequations}
\label{BSQ-notations}
\begin{align}
\label{ro-i}
& \hbox{plane wave factor:}~~\tau_{i,\iota}=\prod_{l=0}^{n-1}(\mathcal {F}_lp+k_{i,\iota}) \prod_{h=0}^{m-1}(\mathcal {G}_hq+k_{i,\iota})
\tau_{i,\iota}^0, \quad \text{with~constants}~\tau_{i,\iota}^0,  \\
\label{sg-j}
& \hbox{plane wave factor:}~~\varsigma_{j,\kappa}=\prod_{l=n_0}^{n-1}(\mathcal {F}_lp+\omega^jk_{j,\kappa})^{-1}
\prod_{h=0}^{m-1}(\mathcal {G}_hq+\omega^jk_{j,\kappa})^{-1} \varsigma_{j,\kappa}^0, \nn \\
& \qquad\qquad\qquad\qquad \text{with~constants} ~\varsigma_{j,\kappa}^0, \\
& N_i\times N_j ~\mathrm{matrix:}~~\bG'_{\ty{DD}}(\{k_{i,\iota}\}_{\iota=1}^{N_{i1}};\{\omega^jk_{j,\kappa}\}_{\kappa=1}^{N_{j1}})
=\bigg(\frac{1}{k_{i,\iota}-\omega^jk_{j,\kappa}}\bigg)_{\iota,\kappa}, \\
& N_i\times N_j ~\mathrm{matrix:}~~\bG'_{\ty{DJ}}(\{k_{i,\iota}\}_{\iota=1}^{N_i};\omega^jb)
=\bigg(\frac{1}{(\kappa-1)!}\partial^{\kappa-1}_b\frac{1}{k_{i,\iota}-\omega^jb}\bigg)_{\iota,\kappa}, \\
& N_i\times N_j ~\mathrm{matrix:}~~\bG'_{\ty{JD}}(a;\{\omega^jk_{j,\kappa}\}_{\kappa=1}^{N_j})
=\bigg(\frac{1}{(a-\omega^jk_{j,\kappa})^{\iota}}\bigg)_{\iota,\kappa},\\
& N_i\times N_j ~\mathrm{matrix:}~~\bG'_{\ty{JJ}}(a;\omega^jb)
=\Bigl(\frac{1}{(\kappa-1)!}\partial^{\kappa-1}_b\frac{1}{(a-\omega^jb)^{\iota}}\Bigl)_{\iota,\kappa}.
\end{align}
\end{subequations}

\end{document}